\long\def\COMMENT#1{}
\newcommand{\And}{\wedge}
\newcommand{\Or}{\vee}
\newcommand{\compl}[1]{\overline{#1}}
\newcommand{\false}{\mathbf{false}}
\newcommand{\true}{\mathbf{true}}
\newcommand{\comment}[1]{}
\def\eod {{\unskip\nobreak\hfil\penalty50
\hskip2em\hbox{}\nobreak\hfil $\Box$
\parfillskip=0pt \finalhyphendemerits=0 \par \medskip}}
\def\qed {{\unskip\nobreak\hfil\penalty50
\hskip2em\hbox{}\nobreak\hfil \rule{2mm}{2mm}
\parfillskip=0pt \finalhyphendemerits=0 \par \medskip}}
\newcommand{\model}{\ensuremath{\mbox{\boldmath $\mathcal{M}$}}\xspace}
\newcommand{\pow}{\mbox{\rm pow}}
\newcommand{\Rel}{\mbox{\rm Rel}}
\newcommand{\SubF}{\mathit{SubF}}
\newcommand{\Kr}{\mathit{K}}
\newcommand{\K}{\mathsf{K}}
\newcommand{\T}{\mathsf{T}}
\newcommand{\Ac}{\mathbf{5}}
\newcommand{\B}{\mathsf{B}}
\newcommand{\Aq}{\mathbf{4}}
\newcommand{\D}{\mathsf{D}}
\newcommand{\Sc}{\mathsf{S5}}
\newcommand{\TSc}{\tau_{\Sc}}
\newcommand{\TScuno}{\tau_{\Sc}^{1}}
\newcommand{\TScdue}{\tau_{\Sc}^{2}}
\newcommand{\TLQSR}{\ensuremath{\mbox{$3\mathit{LQS}^{R}$}}\xspace}
\newcommand{\TLQS}{\ensuremath{\mbox{$3\mathit{LQS}$}}\xspace}
\newcommand{\MLS}{\ensuremath{\mbox{$\mathit{MLS}$}}\xspace}
\newcommand{\Hf}{H_{\varphi}}
\begin{document}

\title{On the satisfiability problem for a 3-level quantified
syllogistic}

\author{Domenico Cantone \and Marianna Nicolosi Asmundo}

\institute{ %
{Universit\`a di Catania, Dipartimento di Matematica e Informatica\\
~email:~\texttt{cantone@dmi.unict.it, nicolosi@dmi.unict.it}}}

\date{}

\maketitle

\begin{abstract}
We show that a collection of three-sorted set-theoretic formulae,
denoted $\TLQSR$ and which admits a restricted form of quantification
over individual and set variables, has a solvable satisfiability
problem by proving that it enjoys a small model property, i.e., any
satisfiable $\TLQSR$-formula $\psi$ has a finite model whose size
depends solely on the size of $\psi$ itself.
We also introduce the sublanguages $(\TLQSR)^{h}$ of $\TLQSR$, whose formulae are characterized by
having quantifier prefixes of length bounded by $h \geq 2$ and some other syntactic constraints,
and we prove that each of them has the satisfiability problem \textsf{NP}-complete.
Then, we show
that the modal logic $\Sc$ can be formalized in $(\TLQSR)^{3}$.
\end{abstract}


%



\section{Introduction}
\emph{Computable set theory} is a research field active since the late
seventies.  Its initial goal was the design of effective
decision procedures to be implemented in theorem provers/verifiers,
for larger and larger collections of set-theoretic formulae (also
called \emph{syllogistics}).  During the years, however, due to the
production of several decidability results of a purely theoretical
nature the main emphasis shifted to the foundational goal of
narrowing the boundary between the decidable and the undecidable in
set theory.

The main results in computable set theory up to 2001 have been
collected in \cite{CFO89,COO01}.  We also mention that the most
efficient decision procedures have been implemented in the proof
verifier \textsf{{\AE}tnaNova} \cite{CPS06,SOC07} and within one of
versions of the system \textsf{STeP} \cite{STeP}.

The basic set-theoretic fragment is the so-called Multi-Level
Syllogistic ($\MLS$, for short) which involves in addition to variables
varying over the von Neumann universe of sets and to propositional
connectives also the basic set-theoretic operators such as $\cup$,
$\cap$, $\setminus$, and the predicates $=$, $\in$, and $\subseteq$.
MLS was proved decidable in \cite{FOS80} and extended over the years
in several ways by the introduction of various operators, predicates,
and restricted forms of quantification.

Most of the decidability results in computable set theory deal with
one-sorted multi-level syllogistics, namely collections of formulae
involving variables of one type only, ranging over the von Neumann
universe of sets.  On the other hand, few decidability results have
been found for multi-sorted stratified syllogistics, where variables
of several types are allowed.  This, despite of the fact that in many
fields of computer science and mathematics often one deals with
multi-sorted languages.

An efficient decision procedure for the satisfiability of the
Two-Level Syllogistic language ($2LS$), a version of $\MLS$ with
variables of two sorts for individuals and sets of individuals, has
been presented in \cite{FerOm1978}.  Subsequently, in \cite{CanCut90},
the extension of $2LS$ with the singleton operator and the Cartesian
product operator has been proved decidable.  The result has been
obtained by embedding $2LS$ in the class of purely universal formulae
of the elementary theory of relations.  Tarski's and Presburger's
arithmetics extended with sets have been studied in \cite{CCS90}.  The
three-sorted language $3LSSPU$ (Three-Level Syllogistic with
Singleton, with Powerset and general Union) has been proved decidable
in \cite{CanCut93}.  More specifically, $3LSSPU$ has three types of
variables, ranging over individuals, sets of individuals, and
collections of sets of individuals, respectively, and involves the
singleton, powerset, and general union operators, in addition to the
operators and predicates present in $2LS$.

In this paper we present a decidability result for the
satisfiability problem of the set-theoretic language $\TLQSR$
(Three-Level
Quantified Syllogistic with Restricted quantifiers), which is a three-sorted quantified syllogistic 
involving \emph{individual variables}, varying over the elements of
a given nonempty universe $D$, \emph{set variables}, ranging over
subsets of $D$, and \emph{collection variables}, varying over
collections of subsets of $D$.

The language of $\TLQSR$ admits a restricted form of quantification
over individual and set variables.  Its vocabulary contains only the
predicate symbols $=$ and $\in$.  In spite of that, $\TLQSR$ allows
to express several constructs of set theory. Among them, the most
comprehensive one is the set former, which in turn allows to express
other operators like the powerset operator, the singleton operator,
and so on.

We will prove that $\TLQSR$ enjoys a small model property by showing
how one can extract, out of a given model satisfying a
$\TLQSR$-formula $\psi$, another model of $\psi$ of bounded finite
cardinality.  The construction of the finite model is inspired to the
algorithms described in \cite{FerOm1978}, \cite{CanCut90}, and \cite{CanCut93}. 

Then, we introduce the sublanguages $(\TLQSR)^h$,of $\TLQSR$, consisting of $\TLQSR$-formulae
having the quantifier prefixes of size bounded by $h \geq 2$ and satisfying some further syntactic constraints.
 It is
shown that each $(\TLQSR)^h$ has
the satisfiability problem \textsf{NP}-complete and that $(\TLQSR)^3$ can express the normal modal logic $\Sc$.

\smallskip

The paper is organized as follows.  In Section \ref{language}, we
introduce the language $\TLQSR$ and we illustrate its expressiveness.
Subsequently, in Section \ref{machinery} the machinery needed to prove
the decidability result is provided.  In particular, a general
definition of a relativized $\TLQSR$-interpretation is introduced,
together with some useful technical results.  In Section
\ref{satisfiability}, the small model property for $\TLQSR$ is
established, thus solving the satisfiability problem for $\TLQSR$.
Then, in Section \ref{expressiveness}, after some examples illustrating the expressivity
of $\TLQSR$ in set theory, we introduce the sublanguages $(\TLQSR)^h$, show
that they have a \textsf{NP}-complete satisfiability problem, and that $(\TLQSR)^3$ can express the modal logic $\Sc$.
Finally, in Section
\ref{conclusions}, we draw our conclusions.

\section{The language $\TLQSR$}\label{language}
We present the language $\TLQSR$ of our interest as follows.  We begin
by defining in Section \ref{genericlanguage} the syntax and the
semantics of a more general three-level quantified language, denoted
$\TLQS$, which contains $\TLQSR$ as a proper fragment.  Then, in
Section \ref{restrictionquant}, we characterize $\TLQSR$ by means of
suitable restrictions on the usage of quantifiers in formulae of
$\TLQS$.

%
%
\subsection{The more general language $\TLQS$}\label{genericlanguage}

\subsubsection{Syntax of $\TLQS$} The three-level quantified language $\TLQS$ involves\footnote{In the paper, variables often come with numerical subscripts. Other types of subscripts are used in
Section \ref{expressiveness} for variables denoting sets or collections of sets of particular relevance (i.e., $X_{U}$, $A_{\pi,h}$).}
\begin{itemize}
\item[(i)] a collection ${\cal V}_{0}$ of \emph{individual} or \emph{sort 0} variables,
denoted by lower case letters $x,y,z,\ldots$;

\item[(ii)] a collection ${\cal V}_{1}$ of \emph{set} or \emph{sort 1} variables,
denoted by final upper case letters $X,Y,Z,\ldots$;

\item[(iii)] a collection ${\cal V}_{2}$ of \emph{collection} or \emph{sort 2} variables,
denoted by initial upper case letters $A,B,C,\ldots$.

\end{itemize}


The atomic formulae of $\TLQS$ are defined as follows:
\begin{itemize}
\item[(1)] level $0$ atomic formulae:
\begin{itemize}
    \item  $x = y$, for $x,y \in {\cal V}_{0}$;
    \item  $x \in X$, for $x \in {\cal V}_{0}, X \in {\cal V}_{1}$;
\end{itemize}
\item[(2)] level $1$ atomic formulae:
\begin{itemize}
    \item $X = Y$, for $X,Y \in {\cal V}_{1}$;
    \item $X \in A$, for $X \in {\cal V}_{1}, A \in {\cal V}_{2}$;
    \item $(\forall z_1) \ldots (\forall z_n) \varphi_0$, with
    $\varphi_0$ a propositional combination of level $0$ atoms and $z_1,\ldots,z_n$ variables of sort $0$;
\end{itemize}
\item[(3)] level $2$ atomic formulae:
\begin{itemize}
    \item $(\forall Z_1) \ldots (\forall Z_m) \varphi_1$, where
    $\varphi_1$ is a propositional combination of level $0$ and
    level $1$ atoms, and $Z_1,\ldots,Z_m$ are variables of sort 1.
 \end{itemize}
\end{itemize}
Finally, the formulae of $\TLQS$ are all the propositional
combinations of atoms of level $0,1$, and $2$.

%

\subsubsection{Semantics of $\TLQS$} A {\em $\TLQS$-interpretation\/}
is a pair $\model=(D,M)$, where
\begin{itemize}
    \item  $D$ is any nonempty collection of objects, called the
    {\em domain\/} or {\em universe\/} of $\model$, and

    \item  $M$ is an assignment over variables of $\TLQS$ such that

    \begin{itemize}
        \item  $Mx \in D$, for each individual variable $x \in {\cal V}_{0}$;

        \item  $MX \in \pow(D)$, for each set variable $X \in {\cal V}_{1}$;

        \item  $MA \in \pow(\pow(D))$, for all collection variables
        $A \in {\cal V}_{2}$.\footnote{We recall that, for any set
    $s$, $\pow(s)$ denotes the \emph{powerset} of $s$, i.e., the
    collection of all subsets of $s$.}
    \end{itemize}
\end{itemize}

Let
\begin{itemize}
    \item[-]$\model=(D,M)$ be a $\TLQS$-interpretation,
    \item[-]  $x_{1},\ldots,x_{l} \in {\cal V}_{0}$,

    \item[-]  $X_{1},\ldots,X_{m} \in {\cal V}_{1}$,


    \item[-]  $u_{1},\ldots,u_{l} \in D$,

    \item[-]  $U_{1},\ldots,U_{m} \in \pow(D)$,

\end{itemize}

By
\[\model[x_{1}/u_{1},\ldots,x_{l}/u_{l},
         X_{1}/U_{1},\ldots,X_{m}/U_{m},
         ]\,,\]
we denote the interpretation $\model' = (D,M')$ such that
\[
\begin{array}{llll}
    M'x_{i} & = & u_{i}\,, & \mbox{{\rm for} $i = 1,\ldots,l$}  \\
    M'X_{j} & = & U_{j}\,, & \mbox{{\rm for} $j = 1,\ldots,m$}  \\
\end{array}
\]
and which otherwise coincides with $\model$ on all remaining
variables. Throughout the paper we use the abbreviations: ${\cal
M}^{z}$ for $\model[z_{1}/u_{1},\ldots,z_{n}/u_{n}]$ and ${\cal
M}^{Z}$ for $\model[Z_{1}/U_{1},\ldots,Z_{m}/U_{m}]$.

\begin{definition}\label{def:satisf}
Let $\varphi$ be a $\TLQS$-formula and let $\model = (D, M)$ be a
$\TLQS$-interpre\-ta\-tion.  We define the notion of {\em
satisfiability} of $\varphi$ with respect to $\model$ (denoted by
$\model \models \varphi$) inductively as follows
\begin{itemize}
\item[1.] $\model \models x=y$ ~~iff~~ $Mx = My$;
\item[2.] $\model \models x \in X$ ~~iff~~ $Mx \in MX$;
\item[3.] $\model \models X=Y$ ~~iff~~ $MX = MY$;
\item[4.] $\model \models X \in A$ ~~iff~~ $MX \in MA$;
\item[5.] $\model \models (\forall z_1) \ldots (\forall z_n) \varphi_0$
~~iff~~ $\model[z_1/u_1,\ldots , z_n/u_n] \models \varphi_0$, for
all $u_1,\ldots ,u_n \in D$;
\item[6.]  $\model \models (\forall Z_1) \ldots (\forall Z_m)
\varphi_1$ iff $\model[Z_1/U_1,\ldots , Z_m/U_m] \models
\varphi_1$, for all $U_1,\ldots,U_n \in \pow(D)$.

Propositional connectives are interpreted in the standard way,
namely

\item[7.] $\model \models \varphi_{1} \wedge \varphi_{2}$ ~~iff~~
$\model \models \varphi_{1}$ and $\model \models \varphi_{2}$;

\item[8.] $\model \models \varphi_{1} \vee \varphi_{2}$ ~~iff~~
$\model \models \varphi_{1}$ or $\model \models \varphi_{2}$;

\item[9.] $\model \models \neg \varphi$ ~~iff~~
$\model \not \models \varphi$. \eod
\end{itemize}
\end{definition}

Let $\psi$ be a $\TLQS$-formula, if $\model \models \psi$, i.e.
$\model$ {\em satisfies\/} $\psi$, then $\model$ is said to be a
$\TLQS$-{\em model\/} for $\psi$.  A $\TLQS$-formula is said
to be {\em satisfiable\/} if it has a $\TLQS$-model.  A
$\TLQS$-formula is \emph{valid} if it is satisfied by all
$\TLQS$-interpretations.
\subsection{Characterizing $\TLQSR$}\label{restrictionquant}
$\TLQSR$ is the subcollection of the formulae $\psi$ of $\TLQS$ such
that, for \emph{every} atomic formula $(\forall Z_1),\ldots,(\forall
Z_m)\varphi_1$ of level 2 occurring in $\psi$ and \emph{every} level
$1$ atomic formula of the form $(\forall z_1) \ldots (\forall z_n)
\varphi_0$ occurring in $\varphi_1$, the
condition
\begin{equation}
    \label{condition}
\neg \varphi_0 \rightarrow \bigwedge_{i=1}^n \bigvee_{j=1}^m  z_i \in Z_j
\end{equation}
is a valid $\TLQS$-formula (in this case we say that the atom
$(\forall z_1) \ldots (\forall z_n) \varphi_0$ is {\em linked} to the
variables $Z_1, \ldots , Z_m$).

Condition~(\ref{condition}) guarantees that, if a given interpretation
assigns to $z_1,\ldots,z_n$ elements of the domain that make
$\varphi_0$ false, such values are contained in at least one of the
subsets of the domain assigned to $Z_1,\ldots,Z_m$.  As shown in the
proof of statement (ii) of Lemma \ref{quantifiedform}, this fact is
used to make sure that satisfiability is preserved in the finite
model.  As the examples in Section \ref{expressiveness} illustrate,
condition~(\ref{condition}) is not particularly restrictive.

The following question arises: how one can establish whether a given
$\TLQS$-formula is a $\TLQSR$-formula?  Observe that
condition~(\ref{condition}) involves no collection variables and no
quantification.  Indeed, it turns out to be a $2LS$-formula and
therefore one could use the decision procedures in \cite{FerOm1978} to
test its validity, since $\TLQS$ is a conservative extension of $2LS$.
We mention also that in most cases of interest, as will be shown
in detail in Section \ref{expressiveness}, condition~(\ref{condition}) is just an instance of the
simple propositional tautology $\neg( A \rightarrow B)
\rightarrow A$, and therefore its validity can be established just by
inspection.

Finally, we observe that though the semantics of $\TLQSR$ plainly
coincides with the one given above for $\TLQS$-formulae, nevertheless
we will refer to $\TLQS$-interpretations of $\TLQSR$-formulae as
\emph{$\TLQSR$-interpretations}.

\section{Relativized interpretations}\label{machinery}
We introduce the notion of relativized interpretation, to be used
together with the decision procedure of Section \ref{decisionproc}
to construct, out of a model $\model = (D,M)$ for a $\TLQSR$-formula
$\psi$, a finite interpretation $\model^* = (D^*,M^*)$ of bounded
size satisfying $\psi$ as well.

%
\begin{definition}\label{relintrp}
Let $\model=(D,M)$ be a $\TLQSR$-interpretation. Let $D^{*}
\subseteq D$, $d^{*} \in D^{*}$, and ${\cal V}'_{1} \subseteq {\cal
V}_{1}$. The {\em relativized\/} interpretation
$\Rel(\model, D^{*}, d^{*}, {\cal V}'_{1})$ of $\model$ with respect
to $D^{*}$, $d^{*}$, and ${\cal V}'_{1}$ is the interpretation
$(D^{*},M^{*})$ such that
\begin{eqnarray*}
    M^{*}x & = & \left\{
    \begin{array}{ll}
        Mx\,, & \mbox{if $Mx \in D^{*}$}  \\
        d^{*}\,, & \mbox{otherwise}
    \end{array}
    \right.
    \\
    M^{*}X & = & MX \cap D^{*}
    \\
M^{*}A  &=&
    ((MA \cap \pow(D^{*})) \\
        & & \;\;\;\setminus \{M^{*}X: X \in {\cal V}'_{1}\})
    \cup \{M^{*}X: X \in {\cal V}'_{1},~MX \in MA\}\,.
\end{eqnarray*}\eod
\end{definition}
The definition of relativized interpretation given above is inspired
by the construction of the finite model described in
\cite{FerOm1978}, \cite{CanCut90}, and in \cite{CanCut93}. We spend
some words on the intuition behind the definition of $M^{*}A$.
Analogously to $M^{*}X$, $M^{*}A$ is obtained from the intersection
of the interpretation of $A$ in $\model$ with the power set of the
finite domain $D^*$. However, such operation may leave in $MA \cap
\pow(D^{*})$ some sets $J$ such that $J = M^*X$ but $MX \notin MA$.
Such $J$'s have to be removed from the restricted interpretation of
$A$ in order to guarantee that satisfiability of $\psi$ is
preserved.
Further, there also may be some $MX \in MA$ such that $M^{*}X\notin
MA \cap \pow(D^{*})$. Again, to let the restricted model preserve
satisfiability of $\psi$, such $M^{*}X$ have to be added to the
interpretation of $A$ in the restricted model. 

For ease of notation, we will often omit the reference to the
element $d^{*} \in D^{*}$ and write simply\/ $\Rel(\model, D^{*},
{\cal V}'_{1})$ in place of\/ $\Rel(\model, D^{*}, d^{*}, {\cal
V}'_{1})$.


The following satisfiability result holds for unquantified atomic
formulae.

\begin{lemma}
\label{le_basic} Let $\model=(D,M)$ be a $\TLQSR$-interpretation.
Also, let $D^{*} \subseteq D$, $d^{*} \in D^{*}$, and ${\cal V}'_{1}
\subseteq {\cal V}_{1}$ be given. Let us put $\model^{*} =
\Rel(\model, D^{*}, d^{*}, {\cal V}'_{1})$. Then the following
holds.
\begin{itemize}
    \item[(a)] $\model^{*} \models x = y$ ~~iff~~ $\model \models x = y$,
    for all $x,y \in {\cal V}_{0}$ such that $Mx, My \in D^{*}$;

    \item[(b)] $\model^{*} \models x \in X$ ~~iff~~ $\model \models x \in X$,
    for all $X \in {\cal V}_{1}$ and $x \in {\cal V}_{0}$ such that
    $Mx \in D^{*}$;

    \item[(c)] $\model^{*} \models X = Y$ ~~iff~~ $\model \models X = Y$,
    for all $X,Y \in {\cal V}_{1}$ such that
    if $MX \neq MY$ then
    $(MX \Delta MY) \cap D^{*} \neq \emptyset$;

    \item[(d)] if for all $X,Y \in {\cal V}'_{1}$ such that $MX \neq MY$
    we have $(MX \Delta MY) \cap D^{*} \neq \emptyset$, then
    $\model^{*} \models X \in A$ ~~iff~~ $\model \models X \in A$,
    for all $X \in {\cal V}'_{1}$, $A \in {\cal V}_{2}$.\footnote{We
    recall that $\Delta$ denotes the symmetric difference operator
    defined by $s \Delta t = (s \setminus t) \cup (t \setminus s)$.}
\end{itemize}
\end{lemma}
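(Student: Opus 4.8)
The plan is to unwind the satisfiability clauses~1--4 of Definition~\ref{def:satisf} against the explicit formulas for $M^{*}$ given in Definition~\ref{relintrp}, handling the four items in order of increasing difficulty. For (a) and (b) the hypothesis $Mx \in D^{*}$ gives $M^{*}x = Mx$ immediately, so (a) collapses to $Mx = My \Leftrightarrow M^{*}x = M^{*}y$, and (b) reduces to noting that $M^{*}x \in M^{*}X = MX \cap D^{*}$ is equivalent to $Mx \in MX$ once $Mx \in D^{*}$ is known. Both are routine.

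For (c) the observation I would isolate is the elementwise identity
\[
(MX \cap D^{*}) \,\Delta\, (MY \cap D^{*}) \;=\; (MX \,\Delta\, MY) \cap D^{*}.
\]
Since $\model^{*} \models X = Y$ means $M^{*}X = M^{*}Y$, i.e.\ $MX \cap D^{*} = MY \cap D^{*}$, this identity says $\model^{*} \models X = Y$ holds exactly when $(MX \,\Delta\, MY) \cap D^{*} = \emptyset$. The forward implication is then trivial, and for the converse I would argue contrapositively: if $MX \neq MY$, the stated hypothesis gives $(MX \,\Delta\, MY) \cap D^{*} \neq \emptyset$, hence $M^{*}X \neq M^{*}Y$.

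Item (d) is where the real work lies, and it is the step I expect to be the main obstacle, since the definition of $M^{*}A$ splits into a set-difference part and an added-back part, and one must control which part a given $M^{*}X$ can land in. The forward direction is easy: if $X \in {\cal V}'_{1}$ and $MX \in MA$, then $M^{*}X$ sits in the added-back set $\{M^{*}Z : Z \in {\cal V}'_{1},\, MZ \in MA\} \subseteq M^{*}A$, so $\model^{*} \models X \in A$. For the converse I would take $M^{*}X \in M^{*}A$ with $X \in {\cal V}'_{1}$ and note that, because $X \in {\cal V}'_{1}$, the value $M^{*}X$ automatically lies in $\{M^{*}Z : Z \in {\cal V}'_{1}\}$ and so is excluded from the first (set-difference) part of $M^{*}A$; hence $M^{*}X$ must arise from the added-back part, yielding some $Z \in {\cal V}'_{1}$ with $MZ \in MA$ and $M^{*}Z = M^{*}X$.

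It then remains to promote $M^{*}X = M^{*}Z$ to $MX = MZ$, and this is precisely where the hypothesis of (d) is decisive. With $X, Z \in {\cal V}'_{1}$ and $MX \cap D^{*} = MZ \cap D^{*}$, the very identity and hypothesis used in (c)---now applied inside ${\cal V}'_{1}$---force $MX = MZ$, since otherwise $(MX \,\Delta\, MZ) \cap D^{*}$ would be simultaneously empty and nonempty. From $MX = MZ$ and $MZ \in MA$ I conclude $MX \in MA$, i.e.\ $\model \models X \in A$. The whole argument thus rests on the separation hypothesis ensuring that $X \mapsto M^{*}X$ does not collapse two variables of ${\cal V}'_{1}$ with distinct interpretations, which is exactly what lets the added-back part of $M^{*}A$ faithfully record membership in $MA$.
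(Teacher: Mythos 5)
Your proof is correct and follows essentially the same route as the paper's: the paper likewise treats (a)--(c) as routine and proves (d) by observing that, for $X \in {\cal V}'_{1}$, the set $M^{*}X$ can only enter $M^{*}A$ through the added-back part $\{M^{*}Z : Z \in {\cal V}'_{1},\, MZ \in MA\}$, and then invokes the separation hypothesis to rule out $M^{*}X = M^{*}Z$ when $MX \neq MZ$. The only cosmetic difference is that the paper phrases the converse of (d) as a proof by contradiction (assuming $MX \notin MA$ and $M^{*}X \in M^{*}A$), whereas you argue it directly by promoting $M^{*}X = M^{*}Z$ to $MX = MZ$.
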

{\bf Proof.}

Cases (a), (b) and (c) are easily verified. We prove only case (d).
%
%
%
To this end, assume that for all $X,Y \in {\cal V}'_{1}$ such that $MX
\neq MY$ we have $(MX \Delta MY) \cap D^{*} \neq \emptyset$.  Let $X
\in {\cal V}'_{1}$ and $A \in {\cal V}_{2}$.  If $MX \in MA$, then
obviously $M^{*}X \in M^{*}A$.  On the other hand, if $MX \notin MA$,
but $M^{*}X \in M^{*}A$, then we must necessarily have $M^{*}X =
M^{*}Z$, for some $Z \in {\cal V}'_{1}$ such that $MZ \in MA$.  But
then, as $MX \neq MZ$, from our hypothesis we would obtain $M^{*}X
\neq M^{*}Z$, which is a contradiction.  \qed

\subsection{Relativized interpretations and quantified atomic formulae}

Satisfiability results for quantified atomic formulae are treated as
shown in the following. Let us put

\begin{eqnarray*}
    \model^{z,*} & = & \Rel(\model^{z},D^{*},{\cal V}'_{1})  \\
    \model^{*,z} & = & \model^{*}[z_{1}/u_{1},\ldots,z_{n}/u_{n}] \\
    \model^{Z,*} & = & \Rel(\model^{Z},D^{*},{\cal V}'_{1} \cup
                         \{Z_{1},\ldots,Z_{m}\})  \\
    \model^{*,Z} & = & \model^{*}[Z_{1}/U_{1},\ldots,Z_{m}/U_{m}]. \\
\end{eqnarray*}

The following lemmas provide useful technical results to be employed
in the proof of Theorem \ref{correctness} below.  In particular,
Lemmas \ref{le_M*zMz*} and \ref{le_eqM*ZMZ*}, which are simply stated
without proof, are used to prove Lemma \ref{quantifiedform}.

\begin{lemma}
\label{le_M*zMz*} Let $u_{1},\ldots,u_{n} \in D^{*}$ and let
$z_{1},\ldots,z_{n} \in {\cal V}_{0}$. Then, for every $x \in {\cal
V}_{0}$ and $X \in {\cal V}_{1}$ we have:
\begin{itemize}
\item [(i)] $M^{*,z}x = M^{z,*}x$,
\item [(ii)] $M^{z,*}X = M^{*,z}X$.
\end{itemize}
%
\end{lemma}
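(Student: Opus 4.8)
The plan is to unfold the definitions of the two interpretations $\model^{z,*} = \Rel(\model^{z}, D^{*}, {\cal V}'_{1})$ and $\model^{*,z} = \model^{*}[z_{1}/u_{1},\ldots,z_{n}/u_{n}]$ and to verify the two identities by a direct case analysis on the variable being evaluated. Since both interpretations share the same domain $D^{*}$, it suffices to compare the assignments $M^{z,*}$ and $M^{*,z}$ variable by variable, recalling that $\model^{z,*}$ first substitutes $z_{i}/u_{i}$ and then relativizes, whereas $\model^{*,z}$ first relativizes and then substitutes.

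For part (i), I would split into two cases according to whether the individual variable $x$ coincides with one of the substituted variables $z_{1},\ldots,z_{n}$. If $x = z_{i}$ for some $i$, then on the one hand $M^{*,z}x = u_{i}$ directly by the definition of the substitution $[z_{1}/u_{1},\ldots,z_{n}/u_{n}]$; on the other hand $M^{z}z_{i} = u_{i}$, and since the hypothesis guarantees $u_{i} \in D^{*}$, the relativization clause for sort $0$ variables leaves this value unchanged, so that $M^{z,*}z_{i} = u_{i}$ as well. If instead $x \neq z_{i}$ for all $i$, then the substitution does not affect $x$, so both $M^{*,z}x$ and $M^{z,*}x$ reduce to the relativization of $Mx$, namely $Mx$ if $Mx \in D^{*}$ and $d^{*}$ otherwise; hence they coincide.

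Part (ii) is even more immediate: since the substitution $[z_{1}/u_{1},\ldots,z_{n}/u_{n}]$ acts only on sort $0$ variables, it leaves the interpretation of the set variable $X$ untouched, so $M^{z}X = MX$. Consequently both $M^{z,*}X$ and $M^{*,z}X$ compute to $MX \cap D^{*}$ by the definition of $M^{*}X$ in $\Rel$, and the equality follows.

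The computations are entirely routine, and I do not expect a genuine obstacle; the only point that requires care is the use of the hypothesis $u_{i} \in D^{*}$ in the case $x = z_{i}$ of part (i). Without this assumption the relativization step in $\model^{z,*}$ could send $u_{i}$ to the default element $d^{*}$, whereas the substitution step in $\model^{*,z}$ would retain $u_{i}$, so that the two assignments would diverge. Thus the hypothesis $u_{1},\ldots,u_{n} \in D^{*}$ is exactly what is needed to make substitution and relativization commute on individual variables, and this is the one place where the proof genuinely relies on its premises.
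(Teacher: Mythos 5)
Your proof is correct; note that the paper itself states Lemma \ref{le_M*zMz*} without proof (it is explicitly described there as ``simply stated without proof''), so there is no authorial argument to diverge from, and your routine unfolding of $\Rel(\model^{z},D^{*},{\cal V}'_{1})$ versus $\model^{*}[z_{1}/u_{1},\ldots,z_{n}/u_{n}]$ is exactly the verification the authors intend the reader to supply. In particular, you correctly isolate the one non-trivial point: the hypothesis $u_{1},\ldots,u_{n}\in D^{*}$ is what prevents the relativization clause for sort $0$ variables from replacing $u_{i}$ by $d^{*}$ in the case $x=z_{i}$, which is precisely why substitution and relativization commute here.
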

%

\begin{lemma}
\label{le_eqM*ZMZ*} Let $\model=(D,M)$ be a $\TLQSR$-interpretation,
$D^{*} \subseteq D$, ${\cal V}'_{1} \subseteq {\cal V}_{1}$,
$Z_{1},\ldots,Z_{m} \in {\cal V}_{1} \setminus {\cal V}'_{1}$,
$U_{1},\ldots,U_{m} \in \pow(D^*) \setminus \{M^{*}X : X \in {\cal
V}'_{1}\}$.

Then the $\TLQSR$-interpretations $\model^{*,Z}$ and $\model^{Z,*}$
coincide.
\end{lemma}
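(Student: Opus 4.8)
The plan is to unfold both definitions and check that the two assignments, which already share the common domain $D^*$, coincide sort by sort. Write $\model^{Z,*} = (D^*, M^{Z,*})$ and $\model^{*,Z} = (D^*, M^{*,Z})$. Since $Z_1,\ldots,Z_m$ are sort $1$ variables, neither substituting $U_j$ for $Z_j$ nor relativizing disturbs the sort $0$ clause beyond the standard definition of $M^*x$, so $M^{Z,*}x = M^{*,Z}x$ is immediate for every $x \in {\cal V}_0$. For sort $1$ variables, if $X \notin \{Z_1,\ldots,Z_m\}$ both assignments return $MX \cap D^*$, while for $X = Z_j$ relativization yields $U_j \cap D^* = U_j$ (using $U_j \in \pow(D^*)$) and substitution yields $U_j$ directly; the two therefore agree. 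In particular, since the hypothesis $Z_j \in {\cal V}_1 \setminus {\cal V}'_1$ makes every $X \in {\cal V}'_1$ distinct from each $Z_j$, we record the identity $M^{Z,*}X = M^*X$ for all $X \in {\cal V}'_1$, to be used below.

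The substantive part is the sort $2$ case. Here I would expand $M^{Z,*}A$ via Definition \ref{relintrp} with the enlarged distinguished family ${\cal V}'_1 \cup \{Z_1,\ldots,Z_m\}$, and $M^{*,Z}A = M^*A$ via the original family ${\cal V}'_1$ (substituting sort $1$ variables leaves collection variables untouched). Abbreviating $P = MA \cap \pow(D^*)$, $S = \{M^*X : X \in {\cal V}'_1\}$, $S^+ = \{M^*X : X \in {\cal V}'_1,\ MX \in MA\}$, and $\mathcal{U} = \{U_1,\ldots,U_m\}$, and using $M^Z A = MA$, the two expansions read
\[
M^{*,Z}A = (P \setminus S) \cup S^+, \qquad
M^{Z,*}A = \bigl(P \setminus (S \cup \mathcal{U})\bigr) \cup S^+ \cup \{U_j : U_j \in MA\},
\]
where in the second expression the contribution of each $Z_j$ is computed from $M^{Z,*}Z_j = U_j$ together with the membership test $M^Z Z_j = U_j \in M^Z A = MA$.

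To reconcile the two, I would invoke the two hypotheses on the $U_j$. The assumption $U_j \notin \{M^*X : X \in {\cal V}'_1\}$ gives $\mathcal{U} \cap S = \emptyset$, whence $P \setminus S = \bigl(P \setminus (S \cup \mathcal{U})\bigr) \cup (P \cap \mathcal{U})$; and the assumption $U_j \in \pow(D^*)$ gives $P \cap \mathcal{U} = \{U_j : U_j \in MA\}$, since membership of a $U_j$ in $P = MA \cap \pow(D^*)$ reduces to membership in $MA$. Substituting the first identity into the expression for $M^{*,Z}A$ and then applying the second turns it into exactly the displayed expression for $M^{Z,*}A$, which finishes the proof.

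I expect the only real obstacle to be the set-algebra bookkeeping in the sort $2$ case: one must keep straight that in $\model^{Z,*}$ the sets $U_j$ are treated as distinguished values (removed from $P$ and reinserted only when $U_j \in MA$), whereas in $\model^{*,Z}$ they are injected after relativization and hence survive in $P$ precisely when they already lie in $MA \cap \pow(D^*)$. The two hypotheses are exactly what make these regimes agree: disjointness of the $U_j$ from the relativized images of ${\cal V}'_1$ prevents double counting, and $U_j \subseteq D^*$ makes ``lies in $MA$'' and ``lies in $P$'' interchangeable for the $U_j$.
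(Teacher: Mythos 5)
Your proof is correct. Note that the paper states this lemma without any proof at all (it is explicitly ``simply stated without proof''), so there is no argument of the authors' to compare against; your write-up supplies exactly the verification the paper omits. The sort~0 and sort~1 cases are as routine as you say, and your sort~2 computation is right: unfolding Definition~\ref{relintrp} for $\Rel(\model^Z, D^*, {\cal V}'_1 \cup \{Z_1,\ldots,Z_m\})$ gives
$M^{Z,*}A = \bigl(P \setminus (S \cup \mathcal{U})\bigr) \cup S^+ \cup \{U_j : U_j \in MA\}$,
while $M^{*,Z}A = M^*A = (P \setminus S) \cup S^+$, and the two hypotheses on the $U_j$ are each used precisely once and are both indispensable: $\mathcal{U} \cap S = \emptyset$ (from $U_j \notin \{M^*X : X \in {\cal V}'_1\}$) converts $P \setminus S$ into $\bigl(P \setminus (S \cup \mathcal{U})\bigr) \cup (P \cap \mathcal{U})$, and $U_j \in \pow(D^*)$ makes $P \cap \mathcal{U} = \{U_j : U_j \in MA\}$. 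One small point worth making explicit if you polish this: the identity $M^{Z,*}X = M^*X$ for $X \in {\cal V}'_1$ also silently uses that substituting the $Z_j$ leaves $MX$ unchanged because the $Z_j$ are distinct from $X$, which is exactly the hypothesis $Z_j \in {\cal V}_1 \setminus {\cal V}'_1$ that you correctly flag.
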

\begin{lemma}\label{quantifiedform}
Let $\model=(D,M)$ be a $\TLQSR$-interpretation.  Let $D^{*} \subseteq
D$, $d^{*} \in D^{*}$, ${\cal V}'_{1} \subseteq {\cal V}_{1}$ be
given, let $\model^{*} = \Rel(\model, D^{*}, d^{*}, {\cal
V}'_{1})$.  Further, let $(\forall z_1) \ldots (\forall z_n)
\varphi_0$ and $(\forall Z_1) \ldots (\forall Z_m) \varphi_1$ be
atomic formulae of level 1 and 2, respectively, such that $Mx \in
D^*$, for every $x \in {\cal V}_{0}$ occurring in $\varphi_0$ or in
$\varphi_1$.  Then we have
\begin{itemize}
\item [(i)] if $\model\models (\forall z_1) \ldots (\forall z_n)
\varphi_0$, then $\model^*\models (\forall z_1) \ldots (\forall z_n)
\varphi_0$;

\item [(ii)] if $\model\models (\forall Z_1) \ldots (\forall Z_m)
\varphi_1$, then $\model^*\models (\forall Z_1) \ldots (\forall Z_m)
\varphi_1$, provided that
\begin{itemize}
\item $(MX \Delta MY) \cap D^{*} \neq \emptyset$, for every $X,Y \in
{\cal V}_1$ with $MX \neq MY$, and that
\item there are
$u_1,\ldots,u_n \in D^*$ such that
$\model[v_1/u_1,\ldots,v_n/u_n]\not\models \varphi_0$, for every
$(\forall z_1) \ldots (\forall z_n) \varphi_0$ not satisfied by
$\model$, and occurring in
${\varphi_1}_{X_1,\ldots,X_m}^{Z_1,\ldots,Z_m}$, with
$X_1,\ldots,X_m$ variables in ${\cal V}_1'$.
\end{itemize}
\end{itemize}
\end{lemma}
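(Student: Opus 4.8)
The plan is to prove both parts by unfolding the satisfaction clauses for the two kinds of universal atoms (items~5 and~6 of Definition~\ref{def:satisf}) and then reducing each evaluation performed in the restricted interpretation to an evaluation in a relativization of a suitable reassignment of $\model$, so that the two preparatory lemmas and Lemma~\ref{le_basic} can be applied atom by atom to the propositional skeleton.

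For part~(i), I would fix arbitrary $u_1,\ldots,u_n\in D^{*}$ and show $\model^{*,z}\models\varphi_0$. By Lemma~\ref{le_M*zMz*} the interpretations $\model^{*,z}$ and $\model^{z,*}=\Rel(\model^{z},D^{*},{\cal V}'_1)$ agree on every sort-$0$ and sort-$1$ variable, hence assign the same truth value to every level-$0$ atom and therefore to the propositional combination $\varphi_0$. Since $\model$ satisfies the whole universal atom and $u_1,\ldots,u_n\in D^{*}\subseteq D$, we have $\model^{z}\models\varphi_0$; and because $Mx\in D^{*}$ for every individual variable occurring in $\varphi_0$ (the bound ones receiving the values $u_i\in D^{*}$, the free ones by the hypothesis of the lemma), clauses~(a) and~(b) of Lemma~\ref{le_basic} applied to $\model^{z}$ give $\model^{z,*}\models\varphi_0$. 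Combining the two facts yields $\model^{*,z}\models\varphi_0$, and as $u_1,\ldots,u_n$ were arbitrary we conclude $\model^{*}\models(\forall z_1)\ldots(\forall z_n)\varphi_0$.

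For part~(ii), I would fix arbitrary $U_1,\ldots,U_m\in\pow(D^{*})$ and aim to show $\model^{*,Z}\models\varphi_1$. The first step is to lift each $U_j$ to a value in $\pow(D)$: if $U_j=M^{*}X$ for some $X\in{\cal V}'_1$ I fix one such witness $X_j$ and set $\bar U_j=MX_j$, otherwise I set $\bar U_j=U_j$. Writing $\model^{\bar Z}=\model[Z_1/\bar U_1,\ldots,Z_m/\bar U_m]$, the main hypothesis gives $\model^{\bar Z}\models\varphi_1$, and a direct computation (invoking Lemma~\ref{le_eqM*ZMZ*} on the indices of the second kind) shows that $\model^{*,Z}$ coincides with $\Rel(\model^{\bar Z},D^{*},{\cal V}'_1\cup\{Z_1,\ldots,Z_m\})$ on every set and collection variable occurring in $\varphi_1$. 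It then remains to transfer satisfiability of $\varphi_1$ from $\model^{\bar Z}$ down to this relativization, proceeding on the boolean structure of $\varphi_1$ and checking each atom: the level-$0$ atoms and the atoms $X=Y$, $X\in A$ are handled by clauses~(a)--(d) of Lemma~\ref{le_basic}, whose hypotheses are supplied by the first proviso $(MX\Delta MY)\cap D^{*}\neq\emptyset$ (extended to $\{Z_1,\ldots,Z_m\}$ through the lifting $\bar U_j$); a positively occurring universal atom is preserved exactly by part~(i) already established; and a negatively occurring universal atom $\neg(\forall z_1)\ldots(\forall z_n)\varphi_0$ requires a falsifying tuple lying inside $D^{*}$, which is precisely what the second proviso guarantees for the atoms of ${\varphi_1}_{X_1,\ldots,X_m}^{Z_1,\ldots,Z_m}$.

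The main obstacle, and the step demanding the most care, is the treatment of those indices $j$ for which the chosen $U_j$ coincides with the restricted value $M^{*}X$ of some primed variable $X\in{\cal V}'_1$. There Lemma~\ref{le_eqM*ZMZ*} does not apply, and the identification of $\model^{*,Z}$ with a relativization of $\model^{\bar Z}$ must instead be mediated by the syntactic substitution $Z_j\mapsto X_j$; this is exactly why the second proviso is phrased in terms of the substituted formula ${\varphi_1}_{X_1,\ldots,X_m}^{Z_1,\ldots,Z_m}$ rather than $\varphi_1$ itself, and one has to verify carefully that the falsifying witnesses it provides for $\model$ remain falsifying after the reassignment of the $Z_j$ and the passage from $D$ to $D^{*}$.
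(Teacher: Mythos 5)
Your part~(i) is correct and is in substance the paper's own argument: identify $\model^{*,z}$ with $\model^{z,*}$ via Lemma~\ref{le_M*zMz*}, then relate $\model^{z}$ to $\model^{z,*}$ on level-0 atoms via clauses (a)--(b) of Lemma~\ref{le_basic}. Your part~(ii) also follows the paper's skeleton (the paper splits into the cases $k=m$ and $k<m$ according to how many of the $U_j$ equal $M^{*}X$ for $X\in{\cal V}'_1$, handles the first kind by the syntactic substitution $Z_j\mapsto X_j$ and the second kind by Lemma~\ref{le_eqM*ZMZ*}, and transfers truth atom by atom; your lifted assignment $\bar{U}_j$ merges the two cases). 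However, there is a genuine gap: nowhere do you invoke condition~(\ref{condition}), the linking condition that distinguishes $\TLQSR$ from $\TLQS$, and the burden you place instead on the second proviso is one it cannot carry.

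Concretely, the second proviso only concerns universal atoms occurring in ${\varphi_1}_{X_1,\ldots,X_m}^{Z_1,\ldots,Z_m}$, i.e.\ atoms in which \emph{every} $Z_j$ has been replaced by a variable of ${\cal V}'_1$; equivalently, it supplies falsifying tuples inside $D^{*}$ only for assignments of the form $Z_j\mapsto MX_j$ with $X_j\in{\cal V}'_1$. If a negative universal atom contains a variable $Z_j$ carrying a second-kind value $U_j$ (so $U_j\neq M^{*}X$, hence also $U_j\neq MX$, for every $X\in{\cal V}'_1$), the proviso says nothing about it, and $\model^{\bar{Z}}$ may falsify that atom with all falsifying tuples lying outside $D^{*}$. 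What excludes this is exactly condition~(\ref{condition}): any tuple falsifying $\varphi_0$ must lie inside the union of the values assigned to $Z_1,\ldots,Z_m$, and for the second-kind indices these values are subsets of $D^{*}$. This is precisely the step the paper performs in its case $k<m$, where from $\neg\varphi_0 \rightarrow \bigwedge_{i=1}^{n}\bigvee_{j=k+1}^{m} z_i\in Z_j$ it concludes $u_1,\ldots,u_n\in D^{*}$, and it is the only place in the entire development where the defining restriction of $\TLQSR$ is used. The omission is fatal rather than cosmetic, because statement~(ii) is false without that restriction: take ${\cal V}'_1=\emptyset$, $D=\{1,2\}$, $D^{*}=\{1\}$, $MX=\{1,2\}$, $MA=\{\{1,2\}\}$, and $\varphi_1 \equiv \neg(\forall z)(z\in Z_1\leftrightarrow z\in X)\Or Z_1\in A$, whose quantified atom is not linked to $Z_1$; then $\model\models(\forall Z_1)\varphi_1$ and both provisos hold vacuously, yet $M^{*}X=\{1\}$ and $M^{*}A=\emptyset$, so $\model^{*}[Z_1/\{1\}]\not\models\varphi_1$. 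A proof such as yours, which never appeals to condition~(\ref{condition}), would establish this false statement as well; so the treatment of negative universal atoms containing second-kind variables must be rerouted through condition~(\ref{condition}) before Lemmas~\ref{le_basic} and~\ref{le_M*zMz*} are applied.
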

{\bf Proof.}
\begin{itemize}
\item [(i)]
Assume by contradiction that there exist $u_1,\ldots,u_n \in D^*$
such that $\model^{*,z}\not\models \varphi_0$.
Then, there must be an atomic formula $\varphi_0'$ in $\varphi_0$
(either of type $x=y$ or $x \in X$) that is differently interpreted
in $\model^{*,z}$ and in $\model^z$.

Let us suppose first that $\varphi_0'$ is the atom $x = y$ and,
without loss of generality, that $\model^{*,z}\not\models x = y$.
By Lemma \ref{le_M*zMz*}, we have $M^{z,*}x \neq M^{z,*}y$. Since
$M^{z,*}x = M^{z}x$, $M^{z,*}y = M^{z}y$, and, by hypothesis,
$M^{z}x =M^{z}y$, we obtain a contradiction.

Now let us suppose that $\varphi_0$ is the atom $x \in X$ and, without
loss of generality, assume that $\model^{*,z}\not\models x \in X$.
By Lemma \ref{le_M*zMz*}, we have $M^{z,*}x \notin M^{z,*}X$, that
is $M^{z}x \notin M^{z}X \cap D^*$, again a contradiction.
%
%

\item [(ii)] 
%
Assume, by way of contradiction, that $\model^* \not\models(\forall
Z_1) \ldots (\forall Z_m) \varphi_1$. Hence there exist $U_1,
\ldots, U_m \in pow(D^*)$ such that ${\cal
M}^{*,Z}\not\models\varphi_1$.

Without loss of generality, assume that $U_i = M^*X_i$, for $1 \leq
i \leq k$ ($k \geq 0$) for some variables $X_1,\ldots,X_k$ in ${\cal
V}_1'$, and that $U_j \neq M^*X$ for all $k+1 \leq j \leq m$ and for
all variables $X$ in ${\cal V}_1'$.

Let $\bar{\varphi}_1$ be the formula obtained by simultaneously
substituting $Z_1,\ldots,Z_k$ with $X_1,\ldots,X_k$ in $\varphi_1$,
and let $\model^{*,Z_k} =
\model^*[Z_{k+1}/U_{k+1},\ldots,Z_{m}/U_m]$. Further, let ${\cal
M}^{Z'}$ be a $\TLQSR$-interpretation differing from $\model^Z$ only
in the evaluation of $Z_1,\ldots,Z_k$ ($M^{Z'}Z_1 = MX_1,\ldots,
M^{Z'}Z_k = MX_k$).

Now we can distinguish two cases.

\medskip

If $k =m$, then $\model^{*,Z_k}$ and $\model^*$ coincide and a
contradiction can be obtained by showing that the implications

$$\model^{*,Z} \not\models  \varphi_1 \Rightarrow \model^*\not\models \bar{\varphi}_1 \Rightarrow \model \not\models \bar{\varphi}_1 \Rightarrow \model^{Z'}\not\models \varphi_1$$

hold. Hence, 
against the hypothesis, we get that $\model \not\models (\forall
Z_1)\ldots(\forall Z_m)\varphi_1$. The first implication,
$\model^{*,Z} \not\models  \varphi_1 \Rightarrow \model^*\not\models
\bar{\varphi}_1$, is plainly derived from the definition of
$\bar{\varphi}_1$. The second one, $\model^*\not\models
\bar{\varphi}_1 \Rightarrow \model \not\models \bar{\varphi}_1$, can
be proved by showing that $\model^*$ and $\model$ interpret each
atomic formula $\bar{\varphi}_1'$ occurring in $\bar{\varphi}_1$ in
the same manner.

If $\bar{\varphi}_1'$ is an atomic formula of level 0 or an atomic
formula of level 1 of type $X = Y$ and $X \in A$, the proof follows
directly from Lemma \ref{le_basic}.

If $\bar{\varphi}_1'$ is an atomic formula of level 1 of type
$(\forall z_1)\ldots,(\forall z_n)\varphi_0$, the implication
$\model\models (\forall z_1) \ldots (\forall z_n) \varphi_0$
$\Rightarrow$ $\model^*\models (\forall z_1) \ldots (\forall z_n)
\varphi_0$ follows from statement (i) of the lemma, whereas
$\model^*\models (\forall z_1) \ldots (\forall z_n) \varphi_0$
$\Rightarrow$ $\model\models (\forall z_1) \ldots (\forall z_n)
\varphi_0$ can be proved by contradiction as follows.  Assume that
$\model \not\models (\forall z_1) \ldots (\forall z_n) \varphi_0$.
Then, by hypothesis, there are $u_1,\ldots,u_n$ in $D^*$ such that
$\model[z_1/u_1,\ldots,z_n/u_n]\not\models \varphi_0$ and, by Lemmas \ref{le_basic},
\ref{le_M*zMz*}, $\model^*[z_1/u_1,\ldots,z_n/u_n]\not\models
\varphi_0$ contradicting our hypothesis.

The last implication, $\model \not\models \bar{\varphi}_1
\Rightarrow \model^{Z'}\not\models \varphi_1$, is deduced by the
definition of $\bar{\varphi}_1$ and of $Z'$.

\medskip

If $k < m$, the schema of the proof is analogous to the previous case.
However, since $\model^{*,Z_k}$ and $\model^*$ do not coincide, the
single steps are carried out in a slightly different manner.  Thus,
for the sake of clarity we report the proof below.

In order to obtain a contradiction we prove that

$$\model^{*,Z} \not\models  \varphi_1 \Rightarrow \model^{*,Z_k}\not\models \bar{\varphi}_1 \Rightarrow \model^{Z_k}\not\models \bar{\varphi}_1 \Rightarrow \model^{Z'}\not\models \varphi_1$$

hold. 

The first implication $\model^{*,Z} \not\models \varphi_1 \Rightarrow
\model^{*,Z_k} \not\models \bar{\varphi}_1$ can be immediately
deduced from the definition of $\bar{\varphi}_1$ and of
$\model^{*,Z_k}$.  The second implication $\model^{*,Z_k}\not\models
\bar{\varphi}_1 \Rightarrow \model^{Z_k}\not\models \bar{\varphi}_1$
can be proved by showing that every atomic formula $\bar{\varphi}_1'$
in $\bar{\varphi}_1$ is interpreted in $\model^{*,Z_k}$ and in
$\model^{Z_k}$ in the same way.

The proof is straightforwardly carried out using Lemmas
\ref{le_eqM*ZMZ*} and \ref{le_basic} in case $\bar{\varphi}_1'$ is
an atomic formula of level 0, or an atomic formula of level 1 of
type $X = Y$ and $X \in A$.

If $\bar{\varphi}_1'$ is an atomic formula of level 1 of type
$(\forall z_1)\ldots(\forall z_n)\varphi_0$,
%
we first show that $\model^{Z_k} \models (\forall z_1)\ldots(\forall
z_n)\varphi_0$ implies that $\model^{*,Z_k} \models (\forall
z_1)\ldots(\forall z_n)\varphi_0$.

If $\model^{Z_k} \models (\forall z_1)\ldots(\forall z_n)\varphi_0$,
we have that $\model^{Z_k,*}
\models (\forall z_1)\ldots(\forall z_n)\varphi_0$,
by (i) of the present lemma, and that $\model^{*,Z_k} \models (\forall
z_1)\ldots(\forall z_n)\varphi_0$, by Lemma
\ref{le_eqM*ZMZ*}.

Now, let us show that $\model^{*,Z_k} \models (\forall
z_1)\ldots(\forall z_n)\varphi_0$ implies that $\model^{Z_k} \models
(\forall z_1)\ldots(\forall z_n)\varphi_0$.

Assume by contradiction that $\model^{Z_k} \not\models (\forall
z_1)\ldots(\forall z_n)\varphi_0$.  Then there exist $u_1,\ldots,u_n
\in D$ such that $\model^{Z_k}[z_1/u_1,\ldots,z_n/u_n]
\not\models\varphi_0$.  In particular, by the condition $\neg
\varphi_0 \rightarrow \And_{i = 1}^n\Or_{j = k+1}^{m} z_i \in Z_j$,
we derive that $u_1,\ldots,u_n$ are elements of $D^*$.  This allows us
to apply Lemma \ref{le_basic} and to obtain that
$\model^{Z_k,z,*}\not\models\varphi_0$.  Then, by Lemma
\ref{le_M*zMz*} we obtain $\model^{Z_k,*,z}\not\models\varphi_0$ and
hence $\model^{Z_k,*}\not\models(\forall z_1)\ldots(\forall
z_n)\varphi_0$. Thus, Lemma \ref{le_eqM*ZMZ*} yields
$\model^{*,Z_k}\not\models (\forall z_1)\ldots(\forall z_n)\varphi_0$
contradicting the hypothesis.

Finally, the third implication, $\model^{Z_k}\not\models
\bar{\varphi}_1 \Rightarrow \model^{Z'}\not\models \varphi_1$ is
directly derived from the definition of $\bar{\varphi}_1$ and of
$Z'$. \qed
%
%
\end{itemize}

\section{The satisfiability problem for
$\TLQSR$-formulae}\label{satisfiability}
In this section we solve the satisfiability problem for $\TLQSR$,
i.e. the problem of establishing for any given formula of $\TLQSR$
whether it is satisfiable or not, as follows:
\begin{itemize}
    \item[(a)] firstly, we reduce effectively the satisfiability
    problem for $\TLQSR$-formulae to the satisfiability problem for
    normalized $\TLQSR$-conjunctions (these will be defined precisely
    below);

    \item[(b)] secondly, we prove that the collection of normalized
    $\TLQSR$-conjunctions enjoys a small model property.
\end{itemize}
From (a) and (b), the solvability of the satisfiability problem for
$\TLQSR$ follows immediately.  Additionally, by further elaborating on
point (a), it could easily be shown that indeed the whole collection
of $\TLQSR$-formulae enjoys a small model property.


\subsection{Normalized $\TLQSR$-conjunctions}\label{normal3LQS}
Let $\psi$ be a formula of $\TLQSR$ and let $\psi_{DNF}$ be a
disjunctive normal form of $\psi$.  Then $\psi$ is satisfiable if
and only if at least one of the disjuncts of $\psi_{DNF}$ is
satisfiable. We recall that the disjuncts of $\psi_{DNF}$ are
conjunctions of level $0,1$, and $2$ literals, i.e. level $0,1$, and
$2$ atoms or their negation.  In view of the previous observations,
without loss of generality, we can suppose that our formula $\psi$
is a conjunction of level $0,1$, and $2$ literals.  In addition, we
can also assume that no bound variable in $\psi$ can occur in more
than one quantifier or can occur also free.

For decidability purposes, negative quantified conjuncts occurring
in $\psi$  can be eliminated as explained below. Let $\model = (D,
M)$ be a model for $\psi$. Then $\model \models \neg (\forall z_1)
\ldots (\forall z_n) \varphi_0$ if and only if ${\cal
M}[z_1/u_1,\ldots , z_n/u_n] \models \neg \varphi_0$, for some
$u_1,\ldots ,u_n \in D$, and $\model \models \neg (\forall Z_1)
\ldots (\forall Z_m) \varphi_1$ if and only if\linebreak
$\model[Z_1/U_1,\ldots , Z_m/U_m] \models \neg \varphi_1$, for some
$U_1,\ldots , U_m \in \pow(D)$.
Thus, each negative literal of type $\neg (\forall z_1) \ldots
(\forall z_n) \varphi_0$ can be replaced by $\neg
(\varphi_{0})^{z_{1},\ldots,z_{n}}_{z'_{1},\ldots,z'_{n}}$, where
$z'_{1},\ldots,z'_{n}$ are newly introduced variables of sort 0.
Likewise, each negative literal of type $\neg (\forall Z_1) \ldots
(\forall Z_m) \varphi_1$ can be replaced by $\neg
(\varphi_{1})^{Z_{1},\ldots,Z_{m}}_{Z'_{1},\ldots,Z'_{m}}$, where
$Z'_{1},\ldots,Z'_{m}$ are newly introduced variables of sort 1.


Hence, we can further assume that $\psi$ is a conjunction of
literals of the following types:
\begin{itemize}
\item[(1)] $x=y$, $\neg (x = y)$, $x \in X$, $\neg (x \in X)$, $X=Y$, $\neg(X
=Y)$, $X \in A$, $\neg(X \in A)$;

\item[(2)] $(\forall z_1) \ldots (\forall
z_n) \varphi_0$, where $n > 0$ and $\varphi_0$ is a propositional
combination of level $0$ atoms;

\item[(3)] $(\forall Z_1) \ldots
(\forall Z_m) \varphi_1$, where $m > 0$ and $\varphi_1$ is a
propositional combination of level $0$ and level $1$ atoms, where
atoms of type $(\forall z_1) \ldots (\forall z_n) \varphi_0$ in
$\varphi_{1}$ are linked to the bound variables $Z_1, \ldots , Z_m$.
\end{itemize}

We call such formulae {\em normalized $\TLQSR$-conjunctions}.

\subsection{A small model property for normalized $\TLQSR$-conjunctions}
\label{decisionproc}
In view of the preceding discussion we can limit ourselves to
consider the satisfiability problem for normalized
$\TLQSR$-conjunctions only.

Thus, let $\psi$ be a normalized $\TLQSR$-conjunction and assume
that $\model = (D, M)$ is a model for $\psi$.

We show how to construct, out of $\model$, a finite
$\TLQSR$-interpretation $\model^* = (D^*,M^*)$ which is a model of
$\psi$.  We proceed as follows.  First we outline a procedure to build
a nonempty finite universe $D^* \subseteq D$ whose size depends solely
on $\psi$ and can be computed {\em a priori}.  Then, a finite
$\TLQSR$-interpretation $\model^* = (D^*,M^*)$ is constructed
according to Definition \ref{relintrp}.  Finally, we show that
$\model^*$ satisfies $\psi$.

\subsubsection{Construction of the universe $D^*$.}\label{ssseUniv}
Let us denote by ${\cal W}_0$, ${\cal W}_1$, and ${\cal W}_2$ the
collections of the variables of sort $0$, $1$, and $2$ present in
$\psi$, respectively. Then we compute $D^*$ by means of the
procedure described below.

%

Let $\psi_1, \ldots , \psi_k$ be the conjuncts of $\psi$. To each
conjunct $\psi_i$ of the form $(\forall Z_{i1}) \ldots (\forall
Z_{im_i}) \varphi_i$ we associate the collection $\varphi_{i1},
\ldots , \varphi_{i\ell_i}$ of atomic formulae of type (2) present
in the matrix of $\psi_i$ and call the variables $Z_{i1}, \ldots ,
Z_{im_i}$ the {\em arguments of } $\varphi_{i1}, \ldots ,
\varphi_{i\ell_i}$. Then we put
$$
\Phi = \{ \varphi_{ij} : 1 \leq i \leq k \hbox{ and } 1 \leq j \leq
\ell_i \}.
$$

For every pair of variables $X,Y$ in ${\cal W}_1$ such that $MX \neq
MY$, let $u_{X,Y}$ be any element in the symmetric difference of
$MX$ and $MY$ and put $\Delta_{1} = \{u_{XY}: X,Y \hbox{ in } {\cal
W}_1
\hbox{ and } MX \neq MY\}$. If $\Delta_1$ is constructed applying
the procedure \emph{Distinguish} described in \cite{CanFer1995}, it holds that
$|\Delta_1| \leq |{\cal W}_1| -1$.


 We initialize $D^*$ with the set $\{Mx : x \hbox{ in } {\cal W}_0
\} \cup \Delta_{1}$. Then, for each $\varphi \in \Phi$ of the form
$(\forall z_1) \ldots (\forall z_n) \varphi_0$ having
$Z_1,\ldots,Z_m$ as arguments and for each ordered $m$-tuple
$(X_{i_1},\ldots,X_{i_m})$ of variables in ${\cal W}_1$, if ${\cal
M}{\varphi_0}_{X_{i_1},\ldots,X_{i_m}}^{Z_1\;\,,\ldots,\;Z_m} =
\hbox{\bf false}$ we insert in $D^*$ elements $u_1,\ldots,u_n \in D$
such that
$$\model[z_1/u_1,\ldots,z_n/u_n]
{\varphi_0}_{X_{i_1},\ldots,X_{i_m}}^{Z_1\;,\ldots,\:Z_m}
=\false\,,$$ otherwise we leave $D^*$ unchanged.

From the previous construction it easily follows that
\begin{equation}\label{DStar}
|D^*| \leq |{\cal W}_0| + |{\cal W}_1| -1 + ((|{\cal W}_1|^{\max m})\max n)|\Phi|\, ,
\end{equation}
where $\max m$ and $\max n$ are respectively the maximal number of quantifiers in formulae of level 2 and the maximal number of
quantifiers in formulae of level 1 occurring in quantified formulae of level 2. Thus, in general, the domain of the small model $D^*$ is exponential in the size of the input formula $\psi$.

\subsubsection{Correctness of the relativization.}
Let us put $\model^* = \Rel(\model, D^*,{\cal W}_1)$. We have to
show that, if $\model\models \psi$, then $\model^*\models\psi$.

\begin{theorem}\label{correctness}
Let $\model$ be a $\TLQSR$-interpretation satisfying a normalized
$\TLQSR$-conjunction $\psi$. Further, let $\model^* = \Rel(\model,
D^*,{\cal W}_1)$ be the $\TLQSR$-interpretation defined according to
Definition \ref{relintrp}, where $D^*$ is constructed as described
above, and let ${\cal W}_1$ be defined as above. Then
$\model^*\models \psi$.
\end{theorem}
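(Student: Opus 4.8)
The overall plan is to use that a normalized $\TLQSR$-conjunction $\psi$ is a conjunction of literals of types (1)--(3), so that it suffices to prove that $\model^*$ satisfies each conjunct in turn. The whole argument then reduces to invoking Lemma~\ref{le_basic} on the unquantified literals and Lemma~\ref{quantifiedform} on the quantified ones, taking ${\cal V}'_1 = {\cal W}_1$ throughout; the substantive part is merely to check that the construction of $D^*$ furnishes precisely the hypotheses that these two lemmas demand.

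First I would extract from the construction of $D^*$ the three facts that drive everything. (P1) As $D^*$ is initialized to include $\{Mx : x \in {\cal W}_0\}$, we have $Mx \in D^*$ for every sort-$0$ variable occurring in $\psi$, hence for every $x$ occurring in any matrix $\varphi_0$ or $\varphi_1$. (P2) As $\Delta_1 \subseteq D^*$ and $\Delta_1$ meets $MX \,\Delta\, MY$ for each pair $X,Y \in {\cal W}_1$ with $MX \neq MY$, we have $(MX \,\Delta\, MY) \cap D^* \neq \emptyset$ for all such pairs. (P3) By the last step of the construction, whenever a type-$(2)$ atom $(\forall z_1)\ldots(\forall z_n)\varphi_0$ with arguments $Z_1,\ldots,Z_m$ is falsified under $\model$ by a substitution $Z_1,\ldots,Z_m \mapsto X_{i_1},\ldots,X_{i_m}$ with the $X_{i_j}$ in ${\cal W}_1$, the set $D^*$ already contains witnesses $u_1,\ldots,u_n$ with $\model[z_1/u_1,\ldots,z_n/u_n]\not\models (\varphi_0)^{Z_1,\ldots,Z_m}_{X_{i_1},\ldots,X_{i_m}}$. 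These are exactly the side conditions of the two lemmas.

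Next I would treat the conjuncts type by type. For a type-$(1)$ literal, positive or negated, the relevant clause of Lemma~\ref{le_basic} gives $\model^*\models\alpha \Leftrightarrow \model\models\alpha$ for the underlying atom $\alpha$: clause (a) for $x=y$ and clause (b) for $x\in X$ are licensed by (P1); clause (c) for $X=Y$ by (P2); and clause (d) for $X\in A$ by (P2) read over all of ${\cal W}_1$. Since each clause is an equivalence, the literal is preserved in either polarity. For a (positive) type-$(2)$ conjunct, Lemma~\ref{quantifiedform}(i) applies at once under (P1). For a (positive) type-$(3)$ conjunct, Lemma~\ref{quantifiedform}(ii) applies: its first proviso is (P2) and its second proviso is (P3). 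Combining the three cases, every conjunct of $\psi$ holds in $\model^*$, so $\model^*\models\psi$.

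The step requiring the most care is the alignment of (P3) with the second proviso of Lemma~\ref{quantifiedform}(ii): one must verify that the ordered $m$-tuples of ${\cal W}_1$-variables enumerated in the construction of $D^*$ range over exactly the substitutions producing the inner atoms $\varphi_0$ occurring in $(\varphi_1)^{Z_1,\ldots,Z_m}_{X_1,\ldots,X_m}$, and that, for each such atom unsatisfied by $\model$, the witnesses deliberately inserted into $D^*$ are the $u_1,\ldots,u_n \in D^*$ the proviso calls for. A small bookkeeping remark is also in order: the provisos of the lemmas are phrased for pairs in ${\cal V}_1$, while (P2) only secures the separation property for pairs in ${\cal W}_1$; this is enough because every set variable entering an atom of $\psi$---and hence every comparison actually reached in the proofs of Lemmas~\ref{le_basic} and~\ref{quantifiedform} with ${\cal V}'_1 = {\cal W}_1$---belongs to ${\cal W}_1$.
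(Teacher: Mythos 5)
Your proposal is correct and follows essentially the same route as the paper's own proof: decompose $\psi$ into its literals, apply Lemma~\ref{le_basic} to the unquantified ones and Lemma~\ref{quantifiedform} to the quantified ones with ${\cal V}'_1 = {\cal W}_1$, and verify that the construction of $D^*$ supplies exactly the hypotheses of those lemmas (your (P1)--(P3) are precisely the facts the paper cites). Your explicit remarks on negated type-(1) literals being handled by the ``iff'' form of Lemma~\ref{le_basic} and on the ${\cal V}_1$-versus-${\cal W}_1$ bookkeeping only make explicit what the paper leaves implicit.
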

{\bf Proof.}

We have to prove that $\model^*\models \psi'$ for every literal
$\psi'$ in $\psi$. Each $\psi'$ is of one of the three types
introduced in Section \ref{normal3LQS}. By applying Lemma
\ref{le_basic} or \ref{quantifiedform} to every $\psi'$ in $\psi$
(according to the type of $\psi'$) we obtain the thesis.

Notice that the hypotheses of Lemma \ref{le_basic} and of Lemma
\ref{quantifiedform} are fulfilled by the construction of $D^*$
outlined above:
\begin{itemize}
\item $Mx \in D^*$, for every variable $x\in {\cal V}_0$.
Furthermore, $(MX \Delta MY) \cap D^* \neq \emptyset$ for every
$X,Y\in {\cal V}_1$ such that $MX \neq MY$ (one just needs to
substitute the generic set of individual variables ${\cal V}_0$ with
${\cal W}_0$ and ${\cal V}_1$ with ${\cal W}_1$);

\item for every atomic formula of type $(\forall z_1)\ldots(\forall
z_n)\varphi_0$ occurring in an atomic formula of level 2 and such
that $\model\not\models(\forall z_1)\ldots(\forall z_n)\varphi_0$,
there are $u_1,\ldots,u_n$ elements of $D^*$ such that
$\model[z_1/u_1,\ldots,z_n/u_n]\not\models \varphi_0$. \qed
%
\end{itemize}

From the above reduction and relativization steps, it is not hard to
derive the following result:
\begin{corollary}
    The fragment $\TLQSR$ enjoys a small model property (and
    therefore its satisfiability problem is solvable). \qed
\end{corollary}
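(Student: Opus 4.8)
The plan is to combine the two ingredients announced at the opening of Section~\ref{satisfiability}: the effective reduction of the satisfiability problem for arbitrary $\TLQSR$-formulae to that for normalized $\TLQSR$-conjunctions, and the small model property for the latter just established in Theorem~\ref{correctness}. Given a satisfiable $\TLQSR$-formula $\psi$, I would first pass to a disjunctive normal form $\psi_{DNF}$ and fix a disjunct $\psi'$ that is itself satisfiable; such a disjunct must exist, since $\psi$ is equivalent to $\psi_{DNF}$ and is satisfiable by assumption. Applying the normalization steps of Section~\ref{normal3LQS} to $\psi'$---in particular the replacement of each negative quantified literal by a negated instance over freshly introduced free variables---yields a normalized $\TLQSR$-conjunction $\widehat{\psi}$ that is equisatisfiable with $\psi'$ and, moreover, such that every model of $\widehat{\psi}$ is a model of $\psi'$.

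Next I would invoke Theorem~\ref{correctness}: starting from any model $\model$ of $\widehat{\psi}$, the relativized interpretation $\model^{*} = \Rel(\model, D^{*}, {\cal W}_{1})$ is a finite model of $\widehat{\psi}$ with $|D^{*}|$ bounded as in~(\ref{DStar}). Since any model of $\widehat{\psi}$ is a model of $\psi'$, and every model of $\psi'$ is a model of $\psi$, the interpretation $\model^{*}$ is a finite model of $\psi$.

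The only real work lies in verifying that the resulting bound depends solely on the size of $\psi$ and not on the starting model $\model$, and for this I would simply track the variable counts through the reduction. The passage to $\psi_{DNF}$ is purely propositional and introduces no new variables, while the elimination of each negative quantified literal introduces at most as many fresh variables as that literal carries quantifiers, hence a total number of fresh variables bounded by $|\psi|$. Consequently $|{\cal W}_{0}|$, $|{\cal W}_{1}|$, $|\Phi|$, and the quantities $\max m$ and $\max n$ occurring in~(\ref{DStar}) are all bounded by functions of $|\psi|$, so that $|D^{*}|$ is bounded by a function of $|\psi|$ alone; this is exactly the small model property. Since the substantive semantic content has already been discharged in Theorem~\ref{correctness}, I expect this bookkeeping to be the main (and only mildly delicate) obstacle---one must also note that the disjunctive normal form may incur an exponential blow-up, which is harmless here because the definition of the small model property permits the bound to be an arbitrary function of $|\psi|$. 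Solvability of the satisfiability problem then follows immediately, as one may decide satisfiability by searching for a model among the finitely many interpretations over domains of the bounded size.
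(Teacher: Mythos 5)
Your proposal is correct and follows essentially the same route as the paper, which derives the corollary directly from the reduction to normalized $\TLQSR$-conjunctions (Section~\ref{normal3LQS}) combined with Theorem~\ref{correctness}; your additional bookkeeping of variable counts through the DNF and normalization steps is precisely the ``further elaboration on point (a)'' that the paper alludes to but leaves implicit. The only minor imprecision is your remark on the DNF blow-up: what matters for the bound is the size of the single chosen disjunct, which stays linear in $|\psi|$ since each disjunct contains each atom of $\psi$ at most once, so no appeal to ``arbitrary functions of $|\psi|$'' is even needed.
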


\section{Expressiveness of the language $\TLQSR$}\label{expressiveness}
 Several constructs of elementary set theory are
easily expressible within the language $\TLQSR$.  In particular, as
will be shown below, it is possible to express with $\TLQSR$-formulae
a restricted variant of the set former, which in turn allows to
express other significant set operators such as binary union,
intersection, set difference, the singleton operator, the powerset
operator and its variants, etc.

More specifically, atomic formulae of type $X=\{z : \varphi(z)\}$ can
be expressed in $\TLQSR$ by the formula

\begin{equation}
\label{eq_setformX} (\forall z)(z \in X \leftrightarrow \varphi(z))\,,
\end{equation}
provided that after transforming it into prenex normal form one
obtains a formula satisfying the syntactic constraints of $\TLQSR$.
In particular, this is always the case whenever $\varphi(z)$ is any
unquantified formula of $\TLQSR$.

The same remark applies also to atomic formulae of type $A=\{Z :
\varphi(Z)\}$.  In this case, in order for a prenex normal form of
\begin{equation}
\label{eq_setformA} (\forall Z)(Z \in A \leftrightarrow \varphi(Z))
\end{equation}
to be in the language $\TLQSR$, it is enough that
\begin{itemize}
\item[(a)] $\varphi(Z)$ does not contain
any quantifier over variables of sort $1$, and
\item[(b)] all quantified variables of
sort $0$ in $\varphi(Z)$ are linked to the variable $Z$ as specified in condition~(\ref{condition}).
\end{itemize}

In what follows we introduce the stratified syllogistics $2LS$,
already mentioned in the introduction, and $3LSSP$ (Three-Level
Syllogistic with Singleton and Powerset), and describe their
formalization in $\TLQSR$. Then we show how to express some other
set-theoretical constructs. Finally, in Section \ref{sec:otherappl} we introduce a family of sublanguages
of $\TLQSR$ having the satisfiability problem \textsf{NP}-complete and able to express the modal logic $\Sc$.
%
%


\subsection{Two-Level Syllogistic} $2LS$ is a fragment
of the elementary theory of sets admitting individual variables,
$x,y, z,\ldots$, set variables, $X,Y,Z,\ldots$, and the constants
$0$ and $1$ standing respectively for the empty set and the domain
of the discourse. Terms and formulae of $2LS$ are constructed out of
variables and constants by means of the set operators
of union, intersection, and set complementation, the binary relators
$=$, $\in$, and $\subseteq$, and the propositional connectives.

\begin{table}[t]
\begin{center}
\begin{tabular}{|l c l|}
\hline
& &  \\
 \hspace*{0.3cm}   (1) $X = Y$, $X \subseteq Y$ & \hspace*{1cm} &
(5) $\neg(X = Y)$, $\neg(X \subseteq Y)$  \hspace*{0.3cm}\\
 \hspace*{0.3cm}   (2) $X = Y \cap Z$, $X = Y \cup Z$ & \hspace*{1cm} & (6) $x \in X$  \hspace*{0.3cm}\\

 \hspace*{0.3cm}   (3) $X = \compl{Y}$ &             & (7) $\neg(x = y)$ \hspace*{0.3cm}\\
 \hspace*{0.3cm}   (4) $X = 0$, $X = 1$ &          & (8) $x = y$  \hspace*{0.3cm}\\
    & &  \\
\hline
\end{tabular}
\end{center}
\caption{$2LS$ literals.}\label{2LSliterals}
\end{table}

$2LS$ has been proved decidable in \cite{FerOm1978} by a
procedure that, taking as input a conjunction $\varphi$ of literals
of the forms illustrated in Table~\ref{2LSliterals}, stops with
failure in case $\varphi$ is unsatisfiable, otherwise returns a
model for $\varphi$.


Every literal from Table~\ref{2LSliterals} can readily be expressed as
a formula of $\TLQSR$.  Indeed, $X = Y$ is an atomic formula of level
1 of $\TLQSR$, whereas $X \subseteq Y$ can be expressed by the
quantified atomic formula $(\forall z)(z \in X \rightarrow z \in Y)$
of level $1$.  $X = Y \cup Z$ can be translated into the formula
$(\forall z) ((z \in Y \Or z \in Z) \leftrightarrow z \in X)$ and $X =
Y \cap Z$ into $(\forall z) ((z \in Y \And z \in Z) \leftrightarrow z
\in X)$.  $X = \compl{Y}$ can be expressed by $(\forall z)(z \in X
\leftrightarrow \neg (z\in Y))$.  Literals of type $X = 0$ and $X = 1$
are translated in the atomic formulae of level 1 $(\forall z) \neg(z
\in X)$ and $(\forall z)(z\in X)$,
respectively. 
Literals of $2LS$ of type $(6), (7),$ and $(8)$ are just atomic
formulae of $\TLQSR$ of level 0.

\begin{table}[t]
\begin{center}
\begin{tabular}{|l c l|}
\hline
& &  \\
 \hspace*{0.3cm}   (1) $A = B$, $A \subseteq B$ & \hspace*{1cm} \hspace*{0.3cm}&
(4) $A = 0$, $A = 1$ \\
 \hspace*{0.3cm}   (2) $A = B \cap C$, $A = B \cup C$ & \hspace*{1cm} & (5) $\neg(A = B)$, $\neg (A \subseteq B)$  \hspace*{0.3cm}\\

 \hspace*{0.3cm}   (3) $A = \compl{A}$ &   & (6) $X \in A$ \hspace*{0.3cm}\\
& &  \\
\hline
\end{tabular}
\end{center}
\caption{$3LS$ literals.}\label{3LSliterals}
\end{table}

\begin{table}[t]
\begin{center}
\begin{tabular}{|l c c c l|}
\hline
 & & & &  \\
  \hspace*{0.3cm}  (1) $X = \{x\}$ & \hspace*{0.4cm} &(2) $A = \{X\}$& \hspace*{0.4cm} &(3) $A = \pow(X)$ \hspace*{0.3cm} \\
 & & & & \\
\hline
\end{tabular}
\end{center}
\caption{Literals with singleton and powerset set
operators.}\label{SingPowliterals}
\end{table}

\subsection{Three Level Syllogistic with Singleton and Powerset}
$3LSSP$ is the sublanguage of $3LSSPU$ not involving the set
theoretic construct of general union.  It can be obtained from $2LS$
by extending it with a new sort of variables $A, B, C,\ldots$,
ranging over collections of sets. Furthermore, besides the usual set
theoretical constructs, $3LSSP$ involves the set singleton operator
$\{\cdot\}$ and the powerset operator $\pow$.

$3LSSP$ can plainly be decided by the decision procedure presented
in \cite{CanCut93} for the whole $3LSSPU$.

All formulae in Tables \ref{3LSliterals} and \ref{SingPowliterals}
are readily expressible by $\TLQSR$-formulae.  For instance, $A = B$
can be translated into the $\TLQSR$-formula $(\forall Z)(Z \in A
\leftrightarrow Z \in B)$ of level 2, whereas $A \subseteq B$ can be
formalized as $(\forall Z)(Z \in A \rightarrow Z \in B)$.  The
literals $A = B \cap C$ and $A = B \cup C$ can be translated into
$(\forall Z)(Z \in A \leftrightarrow (Z \in B \And Z \in C))$ and
$(\forall Z)(Z \in A \leftrightarrow (Z \in B \Or Z \in C))$,
respectively.  $A = \compl{B}$ can be expressed by $(\forall Z)(Z
\in A \leftrightarrow \neg(Z \in B))$.  Literals of type $A = 0$ and
$A = 1$ can be expressed by the formulae $(\forall Z) \neg(Z \in A)$
and $(\forall Z)(Z \in
A)$, respectively. 
Literals of type (6) are just atomic $\TLQSR$-formulae of level
1.

The singleton of level 1, $X = \{x\}$, is expressed by the atomic
formula $(\forall z)(z \in X \leftrightarrow z = x)$ of level 1,
whereas the singleton of level 2, $A = \{X\}$, is translated into the
formula $(\forall Z)(Z \in A \leftrightarrow Z = X)$ of level 2.
Finally, the powerset of a set $X$, $A = \pow(X)$, is translated into
the formula $\varphi \equiv (\forall Z)(Z \in A \leftrightarrow
(\forall z) (z \in Z \rightarrow z \in X))$.  It is easy to check that
$\varphi$ satisfies the restriction on quantifiers introduced in
Section \ref{restrictionquant}.  In fact, putting $\varphi_0 \equiv (z
\in Z \rightarrow z \in X)$ and considering that the general
expression $\bigwedge_{i=1}^n \bigvee_{j=1}^m z_i \in Z_j$ in this
case just reduces to $z \in Z$, we have that the condition
$\neg\varphi_0 \rightarrow \bigwedge_{i=1}^n \bigvee_{j=1}^m z_i \in
Z_j$ is instantiated to $\neg(z \in Z \rightarrow z \in X) \rightarrow
z \in Z$, which is an instance of the tautological schema $\neg(A
\rightarrow B)\rightarrow A$.

%

\subsection{Other set theoretical constructs expressible in $\TLQSR$}\label{applsetcomplex} Other
constructs of set theory are expressible in the $\TLQSR$ formalism.

For instance, the literal $A = \pow_{\leq h}(X)$, where $\pow_{\leq
h}(X)$ denotes the collection of all the subsets of $X$ having at
most $h$ distinct elements, can be expressed in $\TLQSR$ as
$$
\begin{array}{ll}
  (\forall Z)(Z \in A \leftrightarrow & ((\forall z)(z \in Z \rightarrow z \in X) \\
                                      & \And (\forall z_1)\ldots
  (\forall z_{h+1})(\bigwedge_{i=1}^{h+1} z_i \in Z \\
                                      & \;\;\;\;\;\rightarrow \neg (\bigwedge_{i=1}^{h+1}\bigwedge_{j=1, j \neq i}^{h+1} \neg(z_i
  = z_j))))).
                                      \\
\end{array}
$$
Likewise, the literals $A = \pow_{< h}(X)$ and $A = \pow_{= h}(X)$, where $\pow_{< h}(X)$ and $\pow_{= h}(X)$ denote, respectively, the
collection of subsets of $X$ with less than $h$ elements and the collection of subsets of $X$ with exactly $h$ distinct elements, can
be expressed in an analogous way.

In the formalization of $A = \pow_{\leq h}(X)$ given above, the
restriction on quantifiers of Section \ref{restrictionquant} is
satisfied. This can easily be verified for both conjuncts
\begin{itemize}
\item [] $\varphi_1 \equiv (\forall z)(z \in Z \rightarrow z \in X)$, and
\item [] $\varphi_2 \equiv (\forall z_1)\ldots
  (\forall z_{h+1})(\bigwedge_{i=1}^{h+1} z_i \in Z \rightarrow \neg (\bigwedge_{i=1}^{h+1}\bigwedge_{j=1, j \neq i}^{h+1} \neg(z_i
  = z_j)))$.
\end{itemize}
The verification of the validity of condition~(\ref{condition}) for
$\varphi_1$ is identical to the one shown for the formula considered
in the previous paragraph. Thus we just check its validity for
$\varphi_2$.

One just needs to put $\varphi_0 \equiv \bigwedge_{i=1}^{h+1} z_i \in
Z \rightarrow \neg (\bigwedge_{i=1}^{h+1}\bigwedge_{j=1, j \neq
i}^{h+1} \neg(z_i = z_j))$ and observe that $\bigwedge_{i=1}^n
\bigvee_{j=1}^m z_i \in Z_j$ is just $\bigwedge_{i=1}^{h+1} z_i \in
Z$.  Thus $\neg\varphi_0 \rightarrow \bigwedge_{i=1}^n \bigvee_{j=1}^m
z_i \in Z_j$ is just the formula
$$
\neg(\bigwedge_{i=1}^{h+1} z_i \in
Z \rightarrow \neg (\bigwedge_{i=1}^{h+1}\bigwedge_{j=1, j \neq
i}^{h+1} \neg(z_i =
  z_j))) \rightarrow \bigwedge_{i=1}^{h+1} z_i \in Z
$$
which again is plainly an instance of the propositional tautology
$\neg(A \rightarrow B)\rightarrow A$.



The Cartesian product $A = X_1 \otimes \ldots \otimes X_n$ can be
formalized by the $\TLQSR$-formula

$$
\begin{array}{ll}
  (\forall Z)(Z \in A \leftrightarrow & ((\forall z_1)\ldots
  (\forall z_{n})(\bigwedge_{i=1}^{n} z_i \in Z \rightarrow \bigwedge_{i=1}^{n} z_i \in X_i)\\
                                      &\;\;\;\;\;\;\And (\forall z_1)\ldots
  (\forall z_{n+1})(\bigwedge_{i=1}^{n+1} z_i \in Z \\
                                      & \;\;\;\;\;\;\;\;\;\;\;\;\rightarrow \neg (\bigwedge_{i=1}^{n+1}\bigwedge_{j=1, j \neq i}^{n+1} \neg(z_i
  = z_j))))).\\
\end{array}
$$

Even in this case, condition~(\ref{condition}) on quantifiers is
satisfied as well.  This can be checked for both the conjuncts
\begin{itemize}
\item [] $(\forall z_1)\ldots
  (\forall z_{n})(\bigwedge_{i=1}^{n} z_i \in Z \rightarrow \bigwedge_{i=1}^{n} z_i \in
  X_i)$, and
\item [] $(\forall z_1)\ldots
  (\forall z_{n+1})(\bigwedge_{i=1}^{n+1} z_i \in Z \rightarrow \neg (\bigwedge_{i=1}^{n+1}\bigwedge_{j=1, j \neq i}^{n+1} \neg(z_i
  = z_j)))$
\end{itemize}
just as in the previous examples.


Another interesting variant of the power set is the
$\pow^*(X_1,\ldots,X_n)$, which denotes the collection
$$\{Z : Z \subseteq \bigcup_{i=1}^n X_i
         \hbox{ and } Z \cap X_i \neq \emptyset,
         \hbox{ for all } 1 \leq i \leq n \}$$
introduced in \cite{Can91}. The literal $A = \pow^*(X_1,\ldots,X_n)$
is expressed in $\TLQSR$ by

$$
\begin{array}{ll}
(\forall Z)(Z \in A \leftrightarrow\\
 \;\;\;\;\;\;\;\;\;\;\;\; ((\forall z)(z \in Z \rightarrow z \in \bigvee_{i=1}^n z \in X_i)
           \wedge \bigwedge_{i=1}^n \neg(\forall z)(z \in Z \rightarrow \neg z \in
           X_i)).
\end{array}
$$

Also with this formula one can verify that the restriction on
quantifiers is satisfied by checking the subformulae:
\begin{itemize}
\item [] $(\forall z)(z \in Z \rightarrow z \in \bigvee_{i=1}^n z \in
X_i)$,
\item [] $(\forall z)(z \in Z \rightarrow \neg z \in X_i)$, for $i =
1,\ldots,n$.
\end{itemize}

\subsection{Other applications of $\TLQSR$}\label{sec:otherappl}
In this section we introduce a family $\{(\TLQSR)^h\}_{h \geq 2}$ of sublanguages of $\TLQSR$, each having the satisfiability problem \textsf{NP}-complete. Then, in Section \ref{sec:modalSc} we illustrate how the modal logic $\Sc$ can be expressed by the language $(\TLQSR)^3$.

Formulae in $(\TLQSR)^h$ must satisfy several syntactic constraints,
as specified in Definition \ref{def:hlang} below, that are crucial to establish
\textsf{NP}-completeness of the satisfiability problem for the language, specifically to show that it is in \textsf{NP}.
First of all, the length of all quantifier prefixes occurring in a formula of
$(\TLQSR)^h$ must be bounded by the constant $h$.
Thus, given a satisfiable $(\TLQSR)^h$-formula $\varphi$ and a
$\TLQSR$-model $\model = (D, M)$ for it, from
Theorem~\ref{correctness} it follows that $\varphi$ is satisfied by
the relativized interpretation $\model^* = (D^{*}, M^{*})$ of $\model$
with respect to a domain $D^{*}$ having its size bounded as specified in
(\ref{DStar}).  Since  $\max m$ and $\max n$ occurring in (\ref{DStar}) are bounded by
the constant $h$, it
follows that the bound expressed in (\ref{DStar}) is polynomial in the size of
$\varphi$.  The other syntactic constraints on
$(\TLQSR)^h$-formulae are introduced to deduce that $M^{*} A \subseteq
\pow_{\pi, h}(D^{*})$, for any free variable $A$ of sort 2 in
$\varphi$, so that the model
$\model^{*}$ can be guessed in nondeterministic polynomial time in the
size of $\varphi$, and the fact that $\model^{*}$ actually satisfies $\varphi$
can be verified in deterministic polynomial time. This is enough to prove that the
satisfiability problem for $(\TLQSR)^h$-formulae is in \textsf{NP}.

\begin{definition}[$(\TLQSR)^h$-formulae]\label{def:hlang}
Let $\varphi$ be a $\TLQSR$-formula and let $A_1,\ldots,A_p$ be all the variables of sort 2 occurring in it. Then $\varphi$ is a $(\TLQSR)^h$-formula, with
$h \geq 2$, if it has the form
\[
\xi_U \And \xi_{\pi, h} \And \psi_1 \And
\ldots \And \psi_p  \And
\chi \, ,
\]
where
\begin{enumerate}
\item $\xi_U \equiv (\forall z)(z \in X_U)$

\emph{i.e., $X_U$ is the (nonempty) universe of discourse};

\item $\xi_{\pi, h} \equiv (\forall Z)\left(Z \in A_{\pi, h}
\leftrightarrow
      (\forall z_1)\ldots (\forall
      z_{h})\left(\bigwedge_{i=1}^{h} z_i \in Z \rightarrow
      \bigvee_{i,j=1, i<j}^{h} z_i = z_j\right)\right)$

\emph{i.e., $A_{\pi, h} = \pow_{< h}(X_U)$ (together with
formula $\xi_U$)};

\item $\psi_i \equiv (\forall Z)(Z \in A_i \rightarrow Z \in
A_{\pi, h})$, for $i = 1,\ldots,p$;

\emph{i.e., $A_i \subseteq \pow_{\pi h}(X_U)$, for $i = 1,\ldots,p$
(together with formulae $\xi_U$ and $\xi_{\pi, h}$)};

\item $\chi$ is a propositional combination of
\begin{enumerate}
    \item quantifier-free atomic formulae of any level,

    \item\label{levelOne} quantified atomic formulae of level 1 of the form
    \[
    (\forall
    z_1)\ldots(\forall z_n)\varphi_0\,,
    \]
    with $n \leq h$,

    \item\label{levelTwo} quantified atomic formulae of level 2 of the form
    \[
    (\forall
    Z_1)\ldots (\forall Z_m)((Z_1 \in A_{\pi, h} \wedge \ldots
    \wedge Z_m \in A_{\pi, h}) \rightarrow \varphi_1)\,,
    \]
    where $m \leq h$ and $\varphi_1$ is a propositional combination of
    quantifier-free atomic formulae and of quantified atomic formulae
    of level 1 satisfying (\ref{levelOne}) above.
\end{enumerate}
\end{enumerate}

\end{definition}

Next we give the following complexity result on $(\TLQSR)^h$.
\begin{theorem}
The satisfiability problem for $(\TLQSR)^h$ is \textsf{NP}-complete.
\end{theorem}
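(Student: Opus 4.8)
The plan is to prove the two directions of \textsf{NP}-completeness separately: membership in \textsf{NP} and \textsf{NP}-hardness.

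For membership in \textsf{NP}, I would turn the argument sketched just before Definition~\ref{def:hlang} into a formal guess-and-check procedure. Let $\varphi$ be a satisfiable $(\TLQSR)^h$-formula with $\TLQSR$-model $\model=(D,M)$. By Theorem~\ref{correctness} the relativized interpretation $\model^*=(D^*,M^*)$ obtained from $\model$ is again a model of $\varphi$, and since every quantifier prefix of $\varphi$ has length at most $h$, we have $\max m,\max n\leq h$ in~(\ref{DStar}); as $h$ is a fixed constant, the bound~(\ref{DStar}) becomes polynomial in $|\varphi|$, so $|D^*|$ is polynomial. The decisive point is bounding the interpretations of the sort~$2$ variables. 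Since $\model^*\models\xi_U\And\xi_{\pi, h}\And\psi_i$, we obtain $M^*X_U=D^*$, $M^*A_{\pi, h}=\pow_{<h}(D^*)$, and $M^*A_i\subseteq\pow_{<h}(D^*)$ for each (free) sort~$2$ variable $A_i$. Because $\pow_{<h}(D^*)$ contains fewer than $|D^*|^{h}$ sets --- polynomially many, again since $h$ is constant --- each $M^*A_i$ is a subset of a polynomial-size collection, hence describable in polynomial space. Consequently the whole interpretation $\model^*$ can be guessed in nondeterministic polynomial time.

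Next I would argue that $\model^*\models\varphi$ is checkable in deterministic polynomial time, by induction on the structure of $\chi$. Quantifier-free atoms are tested directly. A level~$1$ atom $(\forall z_1)\ldots(\forall z_n)\varphi_0$ with $n\leq h$ requires at most $|D^*|^{h}$ assignments of $D^*$-elements to $z_1,\ldots,z_n$, each evaluating the propositional matrix $\varphi_0$ in polynomial time; this is polynomial. A level~$2$ atom of the guarded form in item~(\ref{levelTwo}) of Definition~\ref{def:hlang} quantifies $Z_1,\ldots,Z_m$ ($m\leq h$) under the antecedent $Z_i\in A_{\pi, h}$, so only the polynomially many members of $\pow_{<h}(D^*)$ need be considered, and for each such tuple $\varphi_1$ is evaluated exactly as in the level~$1$ case. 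Combining the polynomial-size guess with this polynomial verification yields membership in \textsf{NP}.

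For \textsf{NP}-hardness I would reduce propositional satisfiability. Given a propositional formula $\phi$ over variables $p_1,\ldots,p_k$, I fix distinct sort~$0$ variables $x_1,\ldots,x_k$ and one sort~$1$ variable $X$ (distinct from $X_U$), and let $\hat\phi$ be the quantifier-free formula obtained from $\phi$ by replacing each $p_i$ with the level~$0$ atom $x_i\in X$. Then $\xi_U\And\xi_{\pi, h}\And\hat\phi$ is a $(\TLQSR)^h$-formula (with no sort~$2$ variables, so $p=0$ and $\chi=\hat\phi$), computable in polynomial time. A satisfying assignment $\nu$ for $\phi$ gives a model by taking any domain with distinct interpretations for $x_1,\ldots,x_k$ and setting $MX=\{Mx_i:\nu(p_i)=\true\}$; conversely any model of $\hat\phi$ induces the assignment $\nu(p_i)=\true$ iff $Mx_i\in MX$. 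Hence $\phi$ is satisfiable iff $\xi_U\And\xi_{\pi, h}\And\hat\phi$ is, and since propositional satisfiability is \textsf{NP}-hard, so is the satisfiability problem for $(\TLQSR)^h$. The main obstacle is the membership direction, and within it the treatment of sort~$2$ variables: a priori their interpretations range over $\pow(\pow(D^*))$, which is exponential even for polynomial $D^*$. The whole purpose of the syntactic constraints of Definition~\ref{def:hlang} --- the defining formula $\xi_{\pi, h}$, the containments $\psi_i$, and the guard $Z_i\in A_{\pi, h}$ on level~$2$ quantifiers --- is to confine every sort~$2$ interpretation to the polynomial family $\pow_{<h}(D^*)$, and the proof hinges on checking that these constraints survive relativization, which is exactly what Theorem~\ref{correctness} guarantees.
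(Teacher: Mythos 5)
Your overall strategy (relativize to a polynomial-size domain, use $\xi_U$, $\xi_{\pi,h}$, $\psi_i$ to confine sort~2 interpretations to $\pow_{<h}(D^*)$, then guess-and-check) is the paper's strategy, and your hardness reduction is fine --- indeed more detailed than the paper's one-line remark, modulo the routine point that a model must also interpret $X_U$ as the whole domain and $A_{\pi,h}$ as $\pow_{<h}$ of it. But your membership argument has a genuine gap at its first step: you apply Theorem~\ref{correctness} directly to the $(\TLQSR)^h$-formula $\varphi$, whereas that theorem is stated and proved only for \emph{normalized $\TLQSR$-conjunctions}, i.e., conjunctions of literals in which negated quantified literals have been eliminated. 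A $(\TLQSR)^h$-formula is in general not of this form: its conjunct $\chi$ is an arbitrary propositional combination, so it may contain disjunctions and, crucially, negated quantified atoms $\neg(\forall z_1)\ldots(\forall z_n)\varphi_0$ and $\neg(\forall Z_1)\ldots(\forall Z_m)\varphi_1$. Relativization need not preserve the truth of such negative literals: Lemma~\ref{quantifiedform} is one-directional (truth of a universal atom in $\model$ transfers to $\model^*$, not conversely), and shrinking the domain to $D^*$ can turn a false universal statement into a true one unless witnesses to its failure are explicitly placed in $D^*$ --- which the construction of $D^*$ does only for formulae occurring inside level~2 atoms and for values of \emph{free} sort~0 variables of a normalized conjunction.

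The missing idea is the paper's preprocessing step. From $\varphi$ it builds a set $\Hf$ by five rewriting rules that split conjunctions, nondeterministically choose a satisfiable disjunct of each disjunction, and replace every negated quantified literal by an instantiation of its matrix with fresh free variables (so that the relevant witnesses become values of free variables, hence members of $D^*$). This yields $\psi \equiv \bigwedge_{\xi\in\Hf}\xi$ with three key properties: $\psi$ is satisfiable and of the normalized shape to which Theorem~\ref{correctness} applies, $|\psi| = \mathcal{O}(|\varphi|)$, and $\psi \rightarrow \varphi$ is valid; the small model is then built for $\psi$ and is automatically a model of $\varphi$. Note that the obvious repair of your argument --- putting $\chi$ in disjunctive normal form and picking a satisfiable disjunct, as in Section~\ref{normal3LQS} --- is not available here, since DNF conversion can be exponential and would destroy the polynomial bound on $|D^*|$; the linear-size rule-based decomposition (with the disjunct choices absorbed into the nondeterministic guess) is exactly what keeps the whole certificate polynomial, and it is the step your proof needs to add.
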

\begin{proof}
\textsf{NP}-hardness of our problem can be proved by reducing an instance of the satisfiability problem for propositional logic to our
problem.

We prove that our problem is in \textsf{NP} reasoning as follows.
Let
\begin{equation}
    \label{eq_QLQSR}
    \varphi \equiv \xi_U \And \xi_{\pi, h} \And \psi_1 \And
\ldots \And \psi_p  \And
\chi
\end{equation}
be a satisfiable $(\TLQSR)^{h}$-formula, and let $\Hf$ be a set of
formulae defined as follows.  Initially, we put
\[
\Hf := \{\xi_U, \xi_{\pi, h}, \psi_1,
\ldots , \psi_p,
\chi\}
\]
and then, we modify $\Hf$ according to the following five rules, until no
rule can be further applied:
\begin{enumerate}[~~R1:]
\item if $\xi \equiv \neg\neg \xi_1$ is in $\Hf$, then $\Hf = (\Hf
\setminus \{\xi\}) \cup \{\xi_1\}$,

\item if $\xi \equiv \xi_1 \And \xi_2$ (resp., $\xi \equiv \neg (\xi_1
\Or \xi_2)$) is in $\Hf$ (i.e., $\xi$ is a conjunctive formula), then
we put $\Hf := (\Hf \setminus \{\xi\}) \cup \{\xi_1,\xi_2\}$
(resp., $\Hf := (\Hf \setminus \{\xi\}) \cup \{\neg\xi_1,\neg\xi_2\}$),

\item if $\xi \equiv \xi_1 \Or \xi_2$ (resp., $\xi \equiv \neg (\xi_1
\And \xi_2)$) is in $\Hf$ (i.e., $\xi$ is a disjunctive formula), then
we choose a $\xi_i$, $i \in \{1,2\}$, such that $H_{\varphi}\cup
\{\xi_i\}$ (resp., $H_{\varphi}\cup
\{\neg\xi_i\}$)
is satisfiable and put $\Hf := (\Hf \setminus \{\xi\}) \cup
\{\xi_i\}$ (resp., $\Hf := (\Hf \setminus \{\xi\}) \cup
\{\neg\xi_i\}$),

\item if $\xi \equiv \neg(\forall z_1)\ldots (\forall z_n)\varphi_0$
is in $\Hf$, then $\Hf := (\Hf \setminus \{\xi\}) \cup
\{\neg(\varphi_0)_{\bar{z}_1,\ldots,\bar{z}_n}^{z_1,\ldots,z_n}\}$,
where $\bar{z}_1,\ldots,\bar{z}_n$ are newly introduced variables of
sort 0,

\item if $\xi \equiv \neg(\forall Z_1)\ldots (\forall Z_m)\varphi_1$
is in $\Hf$, then $\Hf := (\Hf \setminus \{\xi\}) \cup
\{\neg(\varphi_1)_{\bar{Z}_1,\ldots,\bar{Z}_m}^{Z_1,\ldots,Z_m}\}$,
where $\bar{Z}_1,\ldots,\bar{Z}_m$ are new variables of sort 1.
\end{enumerate}
It is easy to see that the above construction terminates in $\mathcal{O}(|\varphi|)$
steps.
Let us put $\psi \equiv \bigwedge_{\xi \in \Hf}\xi$. Clearly
\begin{enumerate}[~~~~(a)]
\item $\psi$ is a satisfiable $(\TLQSR)^h$-formula,

\item $|\psi| = \mathcal{O}(|\varphi|)$, and

\item $\psi \rightarrow \varphi$ is a valid $\TLQSR$-formula.
\end{enumerate}
In the light of (a)--(c) above, to prove that our problem is in
\textsf{NP}, we only have to construct in nondeterministic polynomial
time a $\TLQSR$-interpretation and show that we can check in polynomial
time that it actually satisfies $\psi$.

Let $\model = (D,M)$ be a $\TLQSR$-model for $\psi$ and let $\model^*
= (D^*,M^*)$ be the relativized interpretation of $\model$ with respect
to a domain $D^{*}$, hence such that
$|D^{*}| = \mathcal{O}(|\psi|^{h+1})$, since $\psi$ is a
$(\TLQSR)^h$-formula (cf.\ Theorem \ref{correctness} and the
construction described in Sections \ref{ssseUniv} and \ref{machinery}).

In the light of the remarks just before Definition~\ref{def:hlang}, to
complete our proof we just have to verify that
\begin{itemize}
\item $M^* A \subseteq \pow_{< h}(D^*)$, for any free variable $A$
of sort 2 in $\psi$ (which entails that $|M^* A| =
\mathcal{O}(|D^{*}|^{h})$), and

\item $\model^* \models \psi$ can be verified in deterministic
polynomial time.
\end{itemize}
The first statement can easily be checked making the following considerations.
By the formula $\xi_{\pi,h}$, we have that $M^* A_{\pi,h} = \pow_{< h}(D^*)$.
Concerning the other $A_i$s of sort 2 occurring in $\psi$, we just have to notice that $\psi$
must contain a conjunct $\psi_i$ associated to $A_i$ that, together with $\xi_{\pi,h}$ ensures that
$M^* A_i \subseteq \pow_{< h}(D^*)$. The proof of the second statement follows from
the fact that each quantified subformula of $\psi$ has the quantifier prefix bounded by $h$ and
that each quantified formula of level 2 has its quantified variables of level 1 ranging in $\pow_{< h}(D^*)$.

Hence the satisfiability problem for $(\TLQSR)^h$-formulae is in \textsf{NP}, and since it is also
\textsf{NP}-hard, it follows that it is
\textsf{NP}-complete.
 \end{proof}
In the next section we will show how the $\TLQSR$ fragment can be used
to formalize the modal logic $\Sc$.
%

\subsubsection{Formalization of $\Sc$ in $\TLQSR$}\label{sec:modalSc}
Let us start with some preliminary notions on modal logics.
The \emph{modal language} $\mathsf{L}_{M}$ is based on a countably
infinite set of propositional letters ${\mathcal P} =
\{p_1,p_2,\ldots\}$, the classical propositional connectives `$\neg$',
`$\And$' , and `$\Or$', the modal operators `$\square$', `$\lozenge$'
(and the parentheses).  $\mathsf{L}_{M}$ is the smallest set
such that ${\mathcal P} \subseteq \mathsf{L}_{M}$, and such that if
$\varphi,\psi \in \mathsf{L}_{M}$, then $\neg \varphi$, $\varphi \And
\psi$, $\varphi \Or \psi$, $\square \varphi$, $\lozenge\varphi\in
\mathsf{L}_{M}$.  Lower case letters like $p$ denote elements of
${\mathcal P}$ and Greek letters like $\varphi$ and $\psi$ represent
formulae of $\mathsf{L}_{M}$.  Given a formula $\varphi$ of
$\mathsf{L}_{M}$, we indicate with $\SubF(\varphi)$ the set of the
subformulae of $\varphi$.

A \emph{normal modal logic} is any subset of $\mathsf{L}_{M}$
which contains all the tautologies and the axiom
$$
\K: \,\, \square(p_1 \rightarrow p_2) \rightarrow (\square p_1 \rightarrow \square p_2) \, ,
$$
and which is closed with respect to modus ponens, substitution, and
necessitation (the reader may consult a text on modal logic like \cite{ModLog01} for more details).

A \emph{Kripke frame} is a pair $\langle W,R\rangle$ such that $W$ is
a nonempty set of possible worlds and $R$ is a binary relation on $W$
called \emph{accessibility relation}.  If $R(w,u)$ holds, we say that
the world $u$ is accessible from the world $w$.  A \emph{Kripke model}
is a triple $\langle W,R,h\rangle$, where $\langle W,R\rangle$ is a
Kripke frame and $h$ is a function mapping propositional letters into
subsets of $W$.  Thus, $h(p)$ is the set of all the worlds where $p$
is true.

Let $\Kr = \langle W,R,h\rangle$ be a Kripke model and let $w$ be a
world in $\Kr$. Then, for every $p \in \mathcal{P}$ and for every
$\varphi,\psi \in {\mathsf{L}_{M}}$, the relation of satisfaction
$\models$ is defined as follows:
\begin{itemize}
\item $\Kr, w \models p$ iff $w \in h(p)$;

\item $\Kr, w \models \varphi \Or \psi$ iff $\Kr, w \models \varphi$ or $\Kr, w \models \psi$;

\item $\Kr, w \models \varphi \And \psi$ iff $\Kr, w \models \varphi$ and $\Kr, w \models \psi$;

\item $\Kr, w \models \neg \varphi$ iff $\Kr, w \not\models \varphi$;

\item $\Kr, w \models \square\varphi$ iff $\Kr, w' \models\varphi$,
for every $w'\in W$ such that $(w,w')\in R$;

\item $\Kr, w \models \lozenge\varphi$ iff there is a $w'\in W$ such that $(w,w')\in R$ and $\Kr, w' \models\varphi$.
\end{itemize}
A formula $\varphi$ is said to be \emph{satisfied} at $w$ in $\Kr$ if
$\Kr, w \models \varphi$; $\varphi$ is said to be \emph{valid} in
$\Kr$ (and we write $\Kr \models \varphi$), if $\Kr,w \models
\varphi$, for every $w \in W$.

The smallest normal modal logic is $\K$, which contains only the modal
axiom $\K$ and whose accessibility relation $R$ can be any binary
relation.  The other normal modal logics admit together with $\K$
other modal axioms drawn from the
ones in Table \ref{tab:modalax}.

%
%
%

\begin{table}[tb]
\begin{center}
\begin{tabular}{|l|l|l|}
  \hline
Axiom & Schema & Condition on $R$ \\\hline
$\T$  & $\square p \rightarrow p$ & Reflexive \\
$\Ac$ & $\lozenge p \rightarrow \square\lozenge p$ & Euclidean \\
$\B$  & $p \rightarrow \square\lozenge p$ & Symmetric \\
$\Aq$ & $\square p \rightarrow \square \square p$ & Transitive \\
$\D$  & $\square p \rightarrow \lozenge p$ & Serial: $(\forall w)(\exists u)R(w,u)$\\
\hline
\end{tabular}
\caption{\label{tab:modalax} Axioms of normal modal logics}
\end{center}
\end{table}

In this paper we analyze the modal logic $\Sc$ which is the strongest
normal modal system.
It can be obtained from the logic $\K$ in several ways.  One of them
consists in adding axioms
$\T$ and $\Ac$ from Table \ref{tab:modalax} to the logic $\K$.  Given a
formula $\varphi$, a Kripke model $\Kr =\langle W,R,h\rangle$, and a world $w \in W$, the
semantics of the modal operators can be defined as follows:
\begin{itemize}
\item $\Kr,w \models \square \varphi$ iff $\Kr,v \models \varphi$, for every $v \in W$,
\item $\Kr,w \models \lozenge \varphi$ iff $\Kr,v \models \varphi$, for some $v \in W$.
\end{itemize}
This makes it possible to translate a formula $\varphi$ of $\Sc$ into the $\TLQSR$ language.

For the purpose of simplifying the definition of the translation
function $\TSc$ given below, the concept of ``empty formula'' is
introduced, to be denoted by $\Lambda$, and not interpreted in any
particular way.  The only requirement on $\Lambda$ needed for the
definition given next is that $\Lambda \And \psi$ and $\psi \And
\Lambda$ are to be considered as syntactic variations of $\psi$, for
any $\TLQSR$-formula $\psi$.

For every propositional letter $p$, let $\TScuno(p) = X_{p}^{1}$,
where $X_{p}^{1}\in {\cal V}_1$, and let $\TScdue : \Sc \rightarrow
\TLQSR$ be the function defined recursively as follows:
\begin{itemize}
\item $\TScdue(p) = \Lambda$,\\

\item $\TScdue(\neg \varphi) = (\forall z)(z \in X_{\neg \varphi}^{1}
\leftrightarrow \neg(z \in X_{\varphi}^{1})) \And \TScdue(\varphi)$,\\

\item $\TScdue(\varphi_1 \And \varphi_2) = (\forall z)(z \in
X_{\varphi_1 \And \varphi_2}^{1} \leftrightarrow (z \in
X_{\varphi_1}^{1} \And z \in X_{\varphi_2}^{1})) \And
\TScdue(\varphi_1) \And \TScdue(\varphi_2)$,\\

\item $\TScdue(\varphi_1 \Or \varphi_2) = (\forall z)(z \in
X_{\varphi_1 \Or \varphi_2}^{1} \leftrightarrow (z \in
X_{\varphi_1}^{1} \Or z \in X_{\varphi_2}^{1})) \And
\TScdue(\varphi_1) \And \TScdue(\varphi_2)$,\\

\item $\TScdue(\square \varphi) = $

\hfill $(\forall z)(z \in X_{\varphi}^{1})
\rightarrow (\forall z)(z \in X_{\square\varphi}^{1})
\And \neg (\forall z)(z \in X_{\varphi}^{1}) \rightarrow
(\forall z)\neg(z \in X_{\square\varphi}^{1}) \And
\TScdue(\varphi)$,\\

\item $\TScdue(\lozenge \varphi) = $

\hfill $\neg (\forall z)\neg (z \in
X_{\varphi}^{1}) \rightarrow (\forall z)(z \in
X_{\lozenge\varphi}^{1})
\And (\forall z) \neg (z \in X_{\varphi}^{1}) \rightarrow
(\forall z)\neg(z \in X_{\lozenge\varphi}^{1}) \And \TScdue(\varphi)$,
\end{itemize}
where $\Lambda$ is the empty formula and $X_{\neg \varphi}^{1}, X_{\varphi}^{1}, X_{\varphi_1 \And \varphi_2}^{1}, X_{\varphi_1 \Or \varphi_2}^{1}, X_{\varphi_1}^{1}, X_{\varphi_2}^{1} \in {\cal V}_1$.

Finally, for every $\varphi$ in $\Sc$, if $\varphi$ is a propositional letter in $\mathcal{P}$ we put
$\TSc(\varphi) = \TScuno(\varphi)$, otherwise $\TSc(\varphi) = \TScdue(\varphi)$.

Even though the accessibility relation $R$ is not used in the translation, we can give its formalization in the $\TLQSR$ fragment
by introducing the collection variable $A_R$ and the following related formulae:
\begin{itemize}
\item $\psi_1 = (\forall Z)(Z \in A_R \rightarrow Z \in A_{\pi,3})$,
\item $\chi_1 = (\forall Z)(Z \in A_{\pi,3}$

$\phantom{\chi_1 = (\forall Z)}\rightarrow (Z \in A_R \leftrightarrow (\forall z_1)(\forall z_2)
  (\forall z_3)((z_1 \in Z \And z_2 \in Z \And z_3 \in Z)$

      $\hfill\rightarrow (z_1 = z_2 \Or z_2 = z_3 \Or z_1 = z_3)).$
\end{itemize}
%
%
%
Clearly $\TSc(\varphi)$ and the formulae above belong to $\TLQSR$ and, in particular, to $(\TLQSR)^3$.
Correctness of the above translation is guaranteed by the following
lemma.
\begin{lemma}\label{leS5}
For every formula $\varphi$ of the logic $\Sc$, $\varphi$ is
satisfiable in a model $\Kr = \langle W,R,h\rangle$ iff there is a
$\TLQSR$-interpretation satisfying $x \in X_{\varphi}$.\eod
\end{lemma}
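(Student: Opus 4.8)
The plan is to establish a tight correspondence between the worlds of a Kripke model and the elements of the domain of a $\TLQSR$-interpretation, matching the truth set $\{w \in W : \Kr, w \models \psi\}$ of each subformula $\psi \in \SubF(\varphi)$ with the value $MX_\psi^1$ of its associated set variable; I read the $\TLQSR$-side condition as satisfaction of the conjunction of defining formulae $\TScdue(\varphi)$ together with $x \in X_\varphi^1$. The linchpin is the S5 semantics recorded above, under which $\Kr, w \models \square\psi$ holds iff $\psi$ is true at every world and $\Kr, w \models \lozenge\psi$ holds iff $\psi$ is true at some world; consequently the truth set of $\square\psi$ is $W$ or $\emptyset$ according as the truth set of $\psi$ is all of $W$ or not, and dually for $\lozenge\psi$. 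This all-or-nothing behaviour is exactly what the two implications in the clauses $\TScdue(\square\psi)$ and $\TScdue(\lozenge\psi)$ encode, and it is the reason the translation is faithful.

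For the forward direction, since $\Sc$ has the universal-frame property (see \cite{ModLog01}), a satisfiable $\varphi$ holds at some world $w_0$ of a model $\Kr = \langle W, R, h\rangle$ with $R = W \times W$, so that the global semantics above applies. I would set $\model = (D, M)$ with $D = W$, $Mx = w_0$, and $MX_\psi^1 = \{w \in W : \Kr, w \models \psi\}$ for every $\psi \in \SubF(\varphi)$ (so that $MX_p^1 = h(p)$ on propositional letters). Each defining conjunct of $\TScdue(\varphi)$ is then checked directly: the Boolean clauses merely assert that the truth set of $\neg\chi$ is $D \setminus MX_\chi^1$, that of $\chi_1 \And \chi_2$ is $MX_{\chi_1}^1 \cap MX_{\chi_2}^1$, and that of $\chi_1 \Or \chi_2$ is $MX_{\chi_1}^1 \cup MX_{\chi_2}^1$, all immediate from the choice of $M$; the modal clauses follow from the all-or-nothing observation. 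Since $\Kr, w_0 \models \varphi$ gives $Mx \in MX_\varphi^1$, we conclude $\model \models x \in X_\varphi^1$.

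For the backward direction, assume a $\TLQSR$-interpretation $\model = (D, M)$ satisfies $\TScdue(\varphi) \And x \in X_\varphi^1$. I would define $\Kr = \langle W, R, h\rangle$ by $W = D$, $R = W \times W$ (a genuine S5 frame, so that the relational and global readings coincide), and $h(p) = MX_p^1$ for the propositional letters $p$ of $\varphi$. The core claim, proved by structural induction on $\psi \in \SubF(\varphi)$, is that $\{w \in W : \Kr, w \models \psi\} = MX_\psi^1$: the base case is the choice of $h$; the Boolean cases follow from the corresponding defining conjuncts read as set equations, using the inductive hypothesis on the immediate subformulae; and in the modal cases the two-part conjunct for $\square\psi$ forces $MX_{\square\psi}^1 = D$ when $MX_\psi^1 = D$ and $MX_{\square\psi}^1 = \emptyset$ otherwise, which by the inductive hypothesis agrees with $\{w : \Kr, w \models \square\psi\}$; the case of $\lozenge\psi$ is dual. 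Finally $Mx \in MX_\varphi^1 = \{w : \Kr, w \models \varphi\}$, so the world $Mx$ witnesses the S5-satisfiability of $\varphi$.

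The only genuinely delicate point is reconciling the two separate implications that define $X_{\square\psi}^1$ and $X_{\lozenge\psi}^1$ with the global S5 truth conditions; once one observes that these implications jointly pin the truth set of the modal subformula to be either $D$ or $\emptyset$, the induction proceeds mechanically, with the empty-formula base clause $\TScdue(p) = \Lambda$ imposing no obligation.
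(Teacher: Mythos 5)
Your proof is correct, and its engine --- the correspondence $MX_\psi^1 = \{w \in W : \Kr,w \models \psi\}$ for every $\psi \in \SubF(\varphi)$, verified by structural induction and powered by the all-or-nothing behaviour of $\square$ and $\lozenge$ under the universal $\Sc$ semantics --- is the same one the paper uses. The difference is organizational, and it works in your favour. The paper fixes a Kripke model $\Kr$ and a world $\bar{w}$, builds a single interpretation $\model = (W,M)$ by setting $Mx = \bar{w}$, $MX_p^1 = h(p)$, and \emph{stipulating} that $M$ makes every defining conjunct of the translation true (an under-determined assignment, since this does not pin down $M$ on the variables $X_\psi^1$ for compound $\psi$), and then proves by induction that $\Kr,w \models \psi$ iff $M \models y \in X_\psi^1$ whenever $My = w$. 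That argument delivers the left-to-right direction of the lemma; the right-to-left direction --- from an \emph{arbitrary} $\TLQSR$-interpretation satisfying the translation, recover a Kripke model --- is never carried out, because the paper's ``iff'' is an equivalence about the one constructed $M$, not an existential statement over all interpretations. Your proposal fills exactly this gap: you define $M$ explicitly by truth sets and check the conjuncts for the forward direction, and you supply the missing converse construction $W = D$, $R = W \times W$, $h(p) = MX_p^1$, with the induction showing that truth sets coincide with the values $MX_\psi^1$. You are also right to read ``satisfying $x \in X_\varphi$'' as ``satisfying $\TScdue(\varphi) \wedge x \in X_\varphi^1$''; read literally, the backward direction would be false (take $\varphi = p \wedge \neg p$ and any $M$ with $Mx \in MX_{\varphi}^1$). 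Finally, your appeal to the universal-frame property of $\Sc$ to justify the global reading of the modalities is more careful than the paper, which simply redefines $\square$ and $\lozenge$ globally by fiat; the two readings agree precisely on universal frames, which is what your constructions use.
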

\begin{proof}
Let $\bar{w}$ be a world in $W$. We construct a $\TLQSR$-interpretation $\model=(W,M)$ as follows:
\begin{itemize}
\item $Mx = \bar{w}$,
\item $M X_{p}^{1} = h(p)$, where $p$ is a propositional letter and $X_{p}^{1} = \TSc(p)$,
\item $M\TSc(\psi) = \true$, for every $\psi\in \SubF(\varphi)$, where $\psi$ is not a propositional letter.
\end{itemize}
To prove the lemma, it would be enough to show that $\Kr,\bar{w}
\models \varphi$ iff $M \models x \in X_{\varphi}^{1}$.  However, it
is more convenient to prove the following more general property:\\

\emph{Given a $w \in W$, if $y \in {\cal V}_0$ is such that $My = w$,
then
$$\Kr,w \models \varphi \mbox{ iff } M \models y \in X_{\varphi}^{1},$$
}
which we do by structural induction on $\varphi$.
\begin{description}
\item \textbf{Base case:} If $\varphi$ is a propositional letter, by
definition, $\Kr,w \models \varphi$ iff $w \in h(\varphi)$.  But this
holds iff $My \in MX_{\varphi}^{1}$, which is equivalent to $M \models
y \in X_{\varphi}^{1}$.

\item \textbf{Inductive step:} We consider only the cases in which
$\varphi = \square \psi$ and $\varphi = \lozenge \psi$, as the other
cases can be dealt with similarly.
\begin{itemize}
\item If $\varphi = \square \psi$, assume first that $\Kr,w \models
\square \psi$.  Then $\Kr,w \models \psi$ and, by inductive
hypothesis, $M \models y \in X_{\psi}^{1}$.  Since $M \models
\TSc(\square \psi)$, it holds that $M \models (\forall z_1)(z_1 \in
X_{\psi}^{1})\rightarrow (\forall z_2)(z_2 \in X_{\square\psi}^{1})$.
Then we have $M[z_1/w,z_2/w]\models (z_1 \in X_{\psi}^{1}) \rightarrow
(z_2 \in X_{\square\psi}^{1})$ and, since $My = w$, we have also that
$M \models (y\in X_{\psi}^{1}) \rightarrow (y \in
X_{\square\psi}^{1})$.  By the inductive hypothesis and by modus
ponens we obtain $M \models y \in X_{\square\psi}^{1}$, as required.

On the other hand, if $\Kr,w \not\models \square \psi$, then $\Kr,w
\not\models \psi$ and, by inductive hypothesis, $M \not\models y \in
X_{\psi}^{1}$.  Since $M \models \TSc(\square \psi)$, then $M \models
\neg(\forall z_1)(z_1 \in X_{\psi}^{1})\rightarrow (\forall
z_2)\neg(z_2 \in X_{\square\psi}^{1})$.  By the inductive hypothesis
and some predicate logic manipulations, we have $M \models \neg(y
\in X_{\psi}^{1})\rightarrow \neg(y \in X_{\square\psi}^{1})$, and by
modus ponens we infer $M\models \neg (y \in X_{\square\psi}^{1})$, as
we wished to prove.

\item Let $\varphi = \lozenge \psi$ and, to begin with, assume that
$\Kr,w \models \lozenge \psi$.  Then, there is a $w'$ such that
$\Kr,w' \models \psi$, and a $y' \in {\cal V}_0$ such that $My' = w'$.
Thus, by inductive hypothesis, we have $M \models y' \in
X_{\psi}^{1}$ and, by predicate logic, $M \models \neg(\forall z_1)\neg(z_1 \in
X_{\psi}^{1})$.  By the very definition of $M$, $M \models
\TSc(\lozenge \psi)$ and thus $M \models \neg(\forall z_1)\neg(z_1 \in
X_{\psi}^{1})\rightarrow (\forall z_2)(z_2 \in X_{\lozenge\psi}^{1})$.
Then, by modus ponens we obtain $M \models (\forall z_2)(z_2 \in
X_{\lozenge\psi}^{1})$ and finally, by
predicate logic, %

$M \models y \in X_{\lozenge\psi}^{1}$.

On the other hand, if $\Kr,w \not\models \lozenge \psi$, then $\Kr,w'
\not\models \psi$, for any $w' \in W$ and, since $w' = My'$ for any
$y' \in {\cal V}_0$, it holds that $M \not\models y' \in X_{\psi}^{1}$
and thus, by predicate logic, $M \models (\forall z_1)\neg(z_1 \in
X_{\psi}^1)$.

Reasoning as above, $M \models (\forall z_1)\neg(z_1 \in
X_{\psi}^{1})\rightarrow (\forall z_2)\neg(z_2 \in
X_{\lozenge\psi}^{1})$ and, by modus ponens, $M \models (\forall
z_2)\neg(z_2 \in X_{\lozenge\psi}^{1})$.  Finally, by predicate logic,
$M \not\models y \in X_{\lozenge\psi}^{1}$, as required.
\end{itemize}
\end{description}
\end{proof}
It can be checked that $\TSc(\varphi)$ is polynomial in the size of
$\varphi$ and that its satisfiability can be verified in
nondeterministic polynomial time since the formula $\xi_{W} \And \xi_{\pi,3}\And \psi_1 \And (\chi_1 \And \TSc(\varphi))$, where
$\xi_W$ denotes $W$, and the other conjuncts are as defined above, belongs to $(\TLQSR)^{3}$.
Consequently, considering that $\Sc$ was proved \textsf{NP}-complete in \cite{Lad77}, the decision algorithm presented in this paper together
with the translation function introduced above can be considered an
optimal procedure (in terms of its computational complexity class) to
decide the satisfiability of any formula $\varphi$ of $\Sc$.
%
%

\section{Conclusions and future work}\label{conclusions}
We have presented the three-sorted stratified set-theoretic fragment
$\TLQSR$ and have given a decision procedure for its satisfiability
problem. Then, we singled out a family of sublanguages of $\TLQSR$, $\{(\TLQSR)^h\}_{h \geq 2}$,
characterized by imposing further constraints in the construction of the formulae, we proved that each language
in the family has the satisfiability problem \textsf{NP}-complete, and we showed that the modal logic $\Sc$
can be formalized in $(\TLQSR)^{3}$.

Techniques to translate modal formulae in set theoretic terms have already
been proposed in \cite{BaMo96}, in the context of hyperset theory, and in \cite{DMP95}
in the ambit of weak set theories not involving the axiom of extensionality and the axiom
of foundation.

We further intend to study the possibility of formulating non-classical logics
in the context of well-founded set theory constructing suitable extensions
of the $\TLQSR$ fragment. In particular, we plan
to introduce in our language a notion of ordered
pair and the operation of composition for binary relations.

%


We also plan to extend the language so as
it can express the set theoretical construct of general union, thus being able to
subsume the theory $3LSSPU$.  Another direction of future investigations concerns
$n$-sorted languages involving also
constructs to express ordered $n$-uples of individuals.



\bibliographystyle{plain}

\end{document}